\documentclass[12pt]{article}
\usepackage[english]{babel}
\usepackage[utf8]{inputenc}
\usepackage{johd}

\usepackage{amsfonts,amssymb,amsbsy,amsmath,latexsym,graphicx}
\usepackage{mathtools,mathrsfs}
\usepackage{natbib}
\usepackage{amsthm}
\usepackage{dsfont,bbding} 
\usepackage{subfigure,color, booktabs}
\usepackage[margin=1in]{geometry}
\usepackage{placeins}
\usepackage{braket}
\usepackage[title]{appendix}
\graphicspath{{Graphics2/}}

\theoremstyle{definition}
\newtheorem{theorem}{Theorem}

\newtheorem{lemma}{Lemma}
\newtheorem{corollary}{Corollary}
\theoremstyle{definition}

\renewcommand{\phi}{\varphi}
\renewcommand{\epsilon}{\varepsilon}

\newcommand{\uN}{\underline{N}}
\newcommand{\un}{\underline{n}}

\newcommand{\Ntil}{\widetilde{N}}
\newcommand{\Ttil}{\widetilde{T}}
\newcommand{\ttil}{\widetilde{t}}

\newcommand{\Var}[1]{\mathsf{Var}\left[{#1}\right]}


\newcommand{\tnorm}[1]{{\left\vert\kern-0.25ex\left\vert\kern-0.25ex\left\vert #1 \right\vert\kern-0.25ex\right\vert\kern-0.25ex\right\vert}}



\newcommand{\mylabel}[2]{#2\def\@currentlabel{#2}\label{#1}}

\newcommand{\gammahat}{\widehat{\gamma}}

\newcommand{\alphahat}{\widehat{\alpha}}
\newcommand{\betahat}{\widehat{\beta}}

\newcommand{\qhat}{\widehat{q}}
\newcommand{\rhat}{\widehat{r}}

\newcommand{\md}{\mbox{d}}

\newcommand{\Expect}[1]{\mathbb{E}\left[{#1}\right]}
\newcommand{\Prob}[1]{\mathbb{P}\left({#1}\right)}

\newcommand{\cip}{\stackrel{p}{\rightarrow}}

\title{Recruitment prediction for multi-centre clinical trials based on a hierarchical Poisson-gamma  model: asymptotic analysis and improved intervals}

\author{Rachael Mountain$^{1}$, Chris Sherlock$^{1}$ \\\\
        \small $^{1}$Department of Mathematics and Statistics, Lancaster University, UK. \\
}

\date{} 



\begin{document}

\maketitle

\begin{abstract} 
\noindent We analyse predictions of future recruitment to a multi-centre clinical trial based on a maximum-likelihood fitting of a commonly used hierarchical Poisson-Gamma model for recruitments at individual centres. We consider the asymptotic accuracy of quantile predictions in the limit as the number of recruitment centres grows large and find that, in an important sense, the accuracy of the quantiles does not improve as the number of centres increases. When predicting the number of further recruits in an additional time period, the accuracy degrades as the ratio of the additional time to the census time increases, whereas when predicting the amount of additional time to recruit a further $n^+_\bullet$ patients, the accuracy degrades as the ratio of $n^+_\bullet$ to the number recruited up to the census period increases. Our analysis suggests an improved quantile predictor. Simulation studies verify that the predicted pattern holds for typical recruitment scenarios in clinical trials and verify the much improved coverage properties of prediction intervals obtained from our quantile predictor. In the process of  extending the applicability of our methodology, we show that in terms of the accuracy of all integer moments it is always better to approximate the sum of independent gamma random variables by a single gamma random variable matched on the first two moments  than by the moment-matched Gaussian available from the central limit theorem.  \end{abstract}

\noindent\keywords{Asymptotic analysis; Asymptotic correction; Clinical trial recruitment; Multi-centre clinical trial; Poisson process;  Recruitment prediction interval.}

\section{Introduction}
\label{sec.intro}
randomised controlled trials represent the gold standard for evaluating the safety and efficacy of a new healthcare intervention or treatment \citep{gold_standard}. Such trials can require thousands of patients, and so will typically recruit from tens or hundreds of centres. The timely recruitment of patients is widely recognised as a key determinant of the success of a clinical trial \citep{Carter04}. Nonetheless, sources suggest as many as 86\% of all clinical trials fail to reach their required recruitment goals [\citealp{Carlisle}; \citealp{Lamberti}; \citealp{Huang}]. Failure to meet recruitment targets can have numerous negative implications, yet arguably the most critical is inadequate statistical power. In such a scenario, there is an increased risk of type II error, thus potentially preventing or delaying an effective treatment from being approved \citep{Treweek}.

Recruitment of a patient to a clinical trial can be thought of as a three-stage process. Firstly, some  recruitment centres are initiated; more centres can be initiated as the trial progresses. Secondly,  a potential recruit is enroled at a given centre; after a lag, the potential recruit is screened for suitability, and if suitable that patient is randomised onto a particular treatment. Methods for predicting future recruitment usually model the probability of screening success separately, so we focus on the second stage of the process.

Future recruitment is often predicted using deterministic methods, based on the number already recruited up to that time, or historical data \citep*{Carter05}. Such an approach is inadequate due to the stochastic nature of the recruitment process, and a number of stochastic models have been proposed.

\citet{Senn1997} considers a Poisson-based model for a multicentre clinical trial where recruitment follows a Poisson process with a fixed study-wide rate, $\lambda \geq 0$. The time to recruit a given number of patients then follows a gamma distribution. The underlying assumption that recruitment follows a Poisson process is well-accepted in the literature, with many articles exploring an inhomogeneous model with a time-dependent rate \citep{Carter04, Carter05, Tang2012, Lan2019}.

The basic Poisson model outlined above fails to incorporate variation in recruitment rate across centres, as well as the uncertainty in the rate estimate. \citet{AF_2007} propose a random effects model in which recruitment follows a homogeneous Poisson process within each centre, with the centre-specific rates viewed as a sample from a gamma distribution. The time to recruit a given number of patients then follows a Pearson type VI distribution, whilst the number recruited in a given time is negative binomial. This model accounts for staggered centre initiation times and provides a method for predicting recruitment for new centres entering the trial. Citations of \citet{AF_2007} on Google Scholar show that it has also been used by major pharmaceutical companies and in  statistical software to plan drug production and distribution across centres during clinical trials. Further details of the model will be given in Section \ref{sec.setup}.

The Anisimov and Fedorov model (henceforth AF) has been developed and extended in numerous directions. For example, \citet*{Bakhshi} suggest an extra level of hierarchy to incorporate variation from trial to trial in the gamma distribution parameters, with an aim to forecast recruitment for trials yet to begin. 
\citet*{Mijoule} propose a Pareto mixture distribution for the centre rates in place of the gamma. 
Further, \citet{Lan2019} and \citet*{szymon} both incorporate time-varying rates into the AF model, whilst also incorporating parameter uncertainty using the Bayesian paradigm.

Alternative methods have been suggested for modelling patient recruitment outside the Poisson approach, including Monte Carlo simulation \citep*{Abbas}, time series analysis \citep{time_series}, Brownian motions \citep{Lai01, Zhang_Lai}, and a nonparametric approach \citep*{Ying}.

We investigate future predictions based on a maximum likelihood fit of the AF model to multi-centre recruitment data, where a total of $N_\bullet$ patients has been recruited over $C$ centres by a census time, $t$. We then consider two prediction objectives, where prediction intervals are required for either  (1) the total number $N_\bullet^+$ recruited over some additional time $t^+$, or (2) the total time $T^+$ to obtain $n_{\bullet}^+$ additional recruits. In this section, for brevity, we focus on objective (1); similar methods and results are obtained for objective (2).

Within the AF model, the distribution of the predicted number of recruits, $\Ntil_\bullet^+$, has a negative binomial distribution, which depends on the observed data via the  maximum likelihood estimates of the model parameters (MLEs); 
in contrast, the true number recruited, $N_\bullet^+\sim \mbox{Poisson}(\lambda_\bullet t^+)$, where $\lambda_\bullet$ is the sum of the recruitment rates of the individual centres. Let $\qhat_p$ be the $p$th quantile of $\Ntil_\bullet^+$; \emph{i.e.}, the predicted quantile. We first investigate $P_p:=\Prob{N_\bullet^+\le \qhat_p}$ in the limit as $C\rightarrow \infty$, and empirically for finite $C$, and show that the key determinant of the behaviour is the ratio $t^+/t$. The desirable result of $P_p=p$ is only recovered in the limit as $t^+/t\rightarrow 0$, whereas in more typical scenarios $P_p$ can be very different from $p$. The underlying reason for this is that the uncertainty in the MLEs is not being accounted for. Our asymptotic approximation to $P_p$ feeds in to a new methodology which allows us to produce tractable prediction intervals, which have a coverage that is very close to that intended, and with a fraction of the computational cost of any bootstrap-based scheme.  

Our theory, and hence our adjusted interval, is derived under the assumption that  all centres opened at the same time; however, sometimes this is not the case. For example, given a predicted shortfall, perhaps based on our theory, it may be decided to open a new set of centres as well as keeping the existing centres going. Alternatively, or in addition, the existing centres may have been opened at different times. Guided by our theory, we provide an intuitive, tractable methodology for creating a prediction interval in such cases and demonstrate its accuracy in practice via  extensive simulation studies.

Section \ref{sec.setup} describes the AF model in detail, and Section \ref{sec.theory.simult} provides the asymptotic analysis in the case where all centres opened at the same time and details the methodology for creating prediction intervals with almost perfect coverage. Section \ref{sec.theory.difft} describes an empirical extension to this methodology for  situations where the centres opened at different times. Our results and methods are verified via a detailed simulation study in Section \ref{sec.simulate}, and then applied to a clinical-trial recruitment data set in Section \ref{sec.AZdata}. We conclude in Section \ref{sec.discuss} with a discussion. First, however, we define the notations that will be used throughout.  

\subsection{Notations}
Let $C$ be the number of centres, and for $c=1,\dots,C$, let $t_c$ and $N_c$ represent the time for which centre $c$ was open before the census time and number recruited in centre $c$ during the time $t_c$. The shorthand $\boldsymbol{\uN}$ refers to the vector $(N_1,\dots,N_C)$, we let $N_\bullet:=\sum_{c=1}^C N_c$, and when all centres are open for the same time we denote that time by  $t$. For Objective One, let $t^+$ be the additional time ahead at which predictions will be made, and let $N_c^+$ be the number recruited in centre $c$ in that time, with $N^+_\bullet=\sum_{c=1}^C N^+_c$. For Objective Two, let $n^+_\bullet$ be the additional number of recruits sought and let $T^+$ be the additional time taken to recruit this number. 
Table \ref{tab:notations} below summarises these notations, and others that will be introduced later.

\begin{table}[h]
\centering
\caption{Common notations used in this article. Objectives One and Two are abbreviated to O1 and O2, respectively. \vspace{0.5em}}
\label{tab:notations}
\begin{tabular}{cl}
\hline
 $C$ & \# centres \\
 $t$ & (global) census time \\
$t_c$ & time centre $c$ is open before census\\ $N_c$ &\# recruits at centre $c$ at census time\\
$N_{\bullet}$ &$\sum_{c=1}^Cn_{c}$\\
$n_c$& realisation of $N_c$\\
$n_\bullet$& realisation of $N_\bullet$\\
$T^+$& time from census until $n^+_{\bullet}$ new recruits\\
$t^+$& realisation of $T^+$ (O2)
 or specified additional recruitment time after census (O1)\\
 $N_c^+$&\# recruits at centre $c$ during specified time $t^+$\\
$N_{\bullet}^+$&$\sum_{c=1}^CN^+_{c}$\\ $n_{\bullet}^+$& realisation of $N_{\bullet}^+$ (O1) or specified total \# additional recruits required (O2)\\
$\qhat_p$&estimated $p$th quantile for $N^+_\bullet$\\
$\rhat_p$&estimated $p$th quantile for $T^+$\\
\hline
\vspace{1em}
\end{tabular}
\end{table}

The negative binomial distribution of the number of successes until there are $a$ failures when the probability of success is $p$ is denoted $\mathsf{NB}(a,p)$. We use the notation $\cip$ and $\Rightarrow$ to indicate convergence in probability and in distribution, respectively, and $\Phi$ to indicate the cumulative distribution function of a $\mathsf{N}(0,1)$ random variable.

\section{Model and prediction set up}
\label{sec.setup}
\subsection{Model, data and likelihood}
The model assumes that the recruitment rate at centre $c$, for $c=1,\dots,C$, is $\lambda_c$, where 
each $\lambda_c $ is drawn independently from
\begin{equation}
\label{eqn.GamHierarchy}
\lambda_c\sim \mathsf{Gam}(\alpha,\beta).
\end{equation}
Data for centre $c$ are $n_c^1,\dots,n_c^{t_c}$, $n_c:=\sum_{s=1}^{t_c} n_c^s$  and $n_\bullet=\sum_{c=1}^C n_c$. The likelihood for centre $c$ is
\begin{align*}
  L(\alpha,\beta,\theta;n_c^{1:t_c})
  &=
  \int_{0}^\infty
  \frac{\beta^\alpha}{\Gamma(\alpha)}\lambda^{\alpha-1}\exp(-\beta \lambda)
  \prod_{s=1}^{t_c} \frac{\lambda^{n_{c}^s}}{n_c^s!}\exp(-\lambda)
  \md \lambda\\
  &\propto
  \frac{\beta^\alpha}{\Gamma(\alpha)}
  \int_0^\infty \lambda^{\alpha+n_c-1}\exp[-\lambda(\beta+t_c)] \md \lambda\\
  &=
       \frac{\Gamma(\alpha+n_c)}{\Gamma(\alpha)}
  \frac{\beta^\alpha}{(\beta+t_c)^{\alpha+n_c}}.
\end{align*}

Hence, up to an additive constant, the log-likelihood given data $n_c^s$, $s=1,\dots,t$, $c=1,\dots,C$, is
\begin{align}
  \ell(\alpha,\beta)&=
  C\alpha \log \beta
  -\sum_{c=1}^C(\alpha+n_c)\log(\beta+t_c)
  -C\log \Gamma(\alpha)+\sum_{c=1}^C\log \Gamma(\alpha+n_c).
  \label{eqn.NBlike}
  \end{align}
Thus $\boldsymbol{\un}=(n_1,\dots,n_C)$ is a sufficient statistic. In the special case where $t_1=\dots=t_C=t$, the second term in  \eqref{eqn.NBlike} reduces to
$-(C\alpha+n_\bullet)\log(\beta+t)$ and, as we shall see in Lemma \ref{lemma.MLEs}, $\alphahat/\betahat$ depends on $\boldsymbol{\un}$ only through $n_\bullet$.

\subsection{Prediction}
Since $N_c|\lambda_c\sim \mathsf{Po}(\lambda_c t_c)$, given a prior of 
$\mathsf{Gam}(\alphahat,\betahat)$ for $\lambda_c$ and an observation of $n_c$, the posterior distribution for $\lambda_c$ is $\mathsf{Gam}(\alphahat+n_c,\betahat+t_c)$. The distribution of
$\lambda_\bullet:=\sum_{c=1}^C\lambda_c$ is not tractable in general, but in the special case where 
$t_1=\dots=t_C=t$,
$\lambda_\bullet \sim \mathsf{Gam}(C\alphahat+n_\bullet,\betahat+t)$.
In this case, since
$N^+_\bullet|\lambda_\bullet\sim \mbox{Po}(\lambda_\bullet t^+)$, marginalising over $\lambda_\bullet$, the predicted total recruitment in further time $t^+$ is
\begin{align}
      \label{eqn.NBpredictive}
\Ntil_\bullet^+\sim \mathsf{NB}\left(C\alphahat+n_\bullet,\frac{t^+}{\betahat+t+t^+}\right),
\end{align}
which has moments of
\begin{equation}
  \label{eqn.NBmoments}
\Expect{\Ntil_\bullet^+}=\frac{C\alphahat+N_\bullet}{\betahat+t}\times t^+
~~~\mbox{and}~~~
\Var{\Ntil_\bullet^+}=\frac{C\alphahat+N_\bullet}{\betahat+t}\times t^+\times \frac{\betahat+t+t^+}{\betahat+t}.
\end{equation}

Alternatively, if the number of additional recruits is fixed at $n^+_\bullet$ then, $T^+|\lambda_\bullet\sim \mbox{Gam}(n^+_\bullet,\lambda_\bullet)$, so in the case where $t_1=\dots=t_C=t$, the predicted further time $\Ttil^+$ to recruit these has a Pearson VI distribution \citep*[e.g.][]{PVI_distr} with a density of
\begin{equation}
  \label{eqn.PearsonVidens}
f(\ttil^+)=\frac{\Gamma(C\alphahat +n_\bullet+n^+_\bullet)}{\Gamma(C\alphahat+n_\bullet)\Gamma(n^+_\bullet)}
\frac{\betahat^{C\alphahat+n_\bullet}(\ttil^+)^{n^+_\bullet-1}}{(\betahat+t+\ttil^+)^{C\alphahat+n_\bullet+n^+_\bullet}}.
\end{equation}
Thus $\Ttil^+$ has moments of:
\begin{equation}
  \label{eqn.PVImoments}
  \Expect{\Ttil^+}=\frac{(\betahat+t)n^+_\bullet}{C\alphahat+N-1}
  ~~~\mbox{and}~~~
  \Var{\Ttil^+}=
  \Expect{\Ttil^+}\times \frac{(\betahat+t)(C\alpha+N_\bullet+n^+_\bullet-1)}
  {(C\alpha+N_\bullet-1)(C\alpha+N_\bullet-2)}.
\end{equation}

\section{Asymptotic analysis and methodology}
We consider the properties of the quantile estimates under repeated sampling, so that $\boldsymbol{\uN}$ is a random variable, and $\alphahat$ and $\betahat$ are, therefore, random. We examine the probability under the true data-generating mechanism that the quantity of interest, $N_\bullet^+$ or $T^+$, will be less than its predicted quantile. This leads to a tractable formula for an alternative probability, $p^*(p)$, such that $\Prob{N_\bullet^+\le \qhat_{p^*}}\approx p$ or $\Prob{T^+\le \rhat_{p^*}}\approx p$, and hence to prediction intervals with close to the intended coverage. In Section \ref{sec.theory.simult} we consider the scenario where all centres have been open for the same time; an intuitive extension for the more general scenario is given in Section \ref{sec.theory.difft}.

\subsection{All centres opened simultaneously}
\label{sec.theory.simult}
When all centres have been open for the same time, $t$,   $N_\bullet\sim \mathsf{Po}(\lambda_\bullet t)$ is the key (random) summary of the data, instead of $n_\bullet$ for the specific realisation; thus $\alphahat$ and $\betahat$ are random. Importantly, in this case $\alphahat/\betahat$ depends on $\boldsymbol{\uN}$ only through $N_\bullet$.

\begin{lemma}
  \label{lemma.MLEs}
  When $t_1=\dots=t_C=t$, the MLE for the likelihood in \eqref{eqn.NBlike} satisfies $\alphahat/\betahat=N_\bullet/(Ct)$.
  \end{lemma}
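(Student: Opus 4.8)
The plan is to work directly from the first-order (score) conditions for the log-likelihood \eqref{eqn.NBlike}, and the key observation is that the ratio $\alphahat/\betahat$ is pinned down by the $\beta$-equation alone, without needing to solve the (transcendental) $\alpha$-equation at all. First I would substitute $t_1=\dots=t_C=t$ into \eqref{eqn.NBlike}, so that the second term collapses to $-(C\alpha+n_\bullet)\log(\beta+t)$ as already noted in the text, giving
\begin{equation*}
  \ell(\alpha,\beta)=C\alpha\log\beta-(C\alpha+n_\bullet)\log(\beta+t)-C\log\Gamma(\alpha)+\sum_{c=1}^C\log\Gamma(\alpha+n_c).
\end{equation*}

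Next I would differentiate with respect to $\beta$. The crucial structural point is that the two $\Gamma$ terms depend only on $\alpha$, so they vanish under $\partial/\partial\beta$, leaving the clean expression
\begin{equation*}
  \frac{\partial\ell}{\partial\beta}=\frac{C\alpha}{\beta}-\frac{C\alpha+n_\bullet}{\beta+t}.
\end{equation*}
Setting this to zero at the MLE $(\alphahat,\betahat)$ and cross-multiplying, the common $C\alphahat\betahat$ terms cancel, yielding $C\alphahat t=\betahat\, n_\bullet$, and hence $\alphahat/\betahat=n_\bullet/(Ct)$. Passing from the realisation $n_\bullet$ to the random summary $N_\bullet$ under repeated sampling gives the stated identity $\alphahat/\betahat=N_\bullet/(Ct)$.

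I do not expect any genuine obstacle here, since the identity is an algebraic consequence of a single score equation rather than of the full system. The only points warranting a remark are that the MLE should be an interior stationary point (so that the first-order condition applies, rather than a boundary optimum at $\alpha\to0$ or $\beta\to0$), and that the $\alpha$-score equation, which does involve digamma functions $\psi(\alphahat+n_c)$ and $\psi(\alphahat)$, is what separately determines $\alphahat$ and $\betahat$; it is precisely because we only need the ratio that we can sidestep that transcendental equation entirely. This also explains the emphasis in the surrounding text that, in the equal-time case, $\alphahat/\betahat$ depends on $\boldsymbol{\uN}$ only through $N_\bullet$: the $\beta$-equation sees the data solely via the total $n_\bullet$.
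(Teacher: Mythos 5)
Your proof is correct and is essentially the paper's own argument in a different parametrisation: the paper sets $\gamma=\alpha/\beta$, substitutes $\beta=\alpha/\gamma$, and maximises over $\gamma$ at fixed $\alpha$ (invoking MLE invariance), which by the chain rule is exactly your $\beta$-score equation, and in both cases the decisive point is the one you identify --- the $\Gamma$ terms depend only on $\alpha$, so the ratio is pinned down by a single clean equation without touching the transcendental $\alpha$-equation. The only slight distinction worth noting is that the paper's profile formulation shows $\partial_\gamma \ell \propto (N_\bullet - \gamma C t)$ with a sign change at $\gamma = N_\bullet/(Ct)$ \emph{for every fixed} $\alpha$, so the ratio $\gammahat=N_\bullet/(Ct)$ is determined even in the degenerate situation (mentioned in the paper's discussion) where the likelihood increases monotonically as $\alpha\to\infty$ with $\alpha/\beta$ fixed and no interior stationary point in $(\alpha,\beta)$ exists, whereas your version, as you yourself remark, presupposes an interior MLE.
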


\begin{proof}
  Set $\gamma=\alpha/\beta$; from the invariance principle it is sufficient to show that $\gammahat=N_\bullet/(Ct)$. Substituting for $\beta$ and ignoring terms only in $\alpha$,  \eqref{eqn.NBlike} becomes:
  \begin{align*}
  \ell(\alpha,\gamma)&=
  -C\alpha \log \gamma
  -\left(C\alpha + N_\bullet\right)\log(\alpha/\gamma+t)\\
  &=N_\bullet \log \gamma - (C\alpha+N_\bullet)\log(\alpha + \gamma t).
  \end{align*}
  Thus
  \[
  \partial_\gamma \ell=
  \frac{N_\bullet}{\gamma}-\frac{C\alpha+N_\bullet}{\alpha+\gamma t}\times t
  =\frac{\alpha}{\gamma(\alpha+\gamma t)}\left(N_\bullet - \gamma Ct\right),
  \]
  which is zero (and a maximum for $\ell$) when $\gamma=N_\bullet/(Ct)$, as required.
    \end{proof}

We now state our main result.

\begin{theorem}
  \label{thrm.asympt}
  Consider an infinite sequence of recruitment scenarios indexed by the number of recruitment centres, $C=1,2,\dots$. In each scenario, $C$, after each centre has been opened for a fixed common time $t$, $(\alpha, \beta)$ is estimated from data $\boldsymbol{\uN}^{(C)}$ by maximising \eqref{eqn.NBlike}. It is used in \eqref{eqn.NBpredictive} to estimate the $p$th quantile, $q_p^{(C)}$, of the total number, $N_\bullet^{+(C)}$ of recruits in an additional, fixed time $t^+$; it is also used in \eqref{eqn.PearsonVidens} to estimate the $p$th quantile, $r_p^{(C)}$, of the time, $T^{+(C)}$ until a further $n^{+(C)}$ recruits have been obtained. Denoting the quantile estimates as $\qhat_p^{(C)}$ and $\rhat_p^{(C)}$, respectively, the following results hold with $Z\sim N(0,1)$. 
  \begin{enumerate}
    \item
      \[
      \lim_{C\rightarrow \infty}
      \Prob{N_\bullet^{+(C)}\le \qhat_p^{(C)}\mid \boldsymbol{\uN}^{(C)}}\stackrel{D}{=}
      \Phi\left\{
  \sqrt{\frac{t^+}{t}}Z+\Phi^{-1}(p)\sqrt{1+\frac{t^+/\beta}{1+t/\beta}}
      \right\}.
      \]
      However, for large $C$, $\qhat_p^{(C)}/q_p^{(C)} -1=\mathcal{O}(1/\sqrt{C})$; moreover
      \[
      \frac{\qhat_{1-p/2}^{(C)}-\qhat_{p/2}^{(C)}}
      {q_{1-p/2}^{(C)}-q_{p/2}^{(C)}}
      \sim
      \sqrt{1+\frac{t^+/\beta}{1+t/\beta}}.
      \]
    \item 
       If as $C\rightarrow \infty$,  $n_\bullet^{+(C)}/C\rightarrow a>0$,
     \[
\lim_{C\rightarrow \infty}     \Prob{T^{+(C)}\le \rhat_p^{(C)}\mid \boldsymbol{\uN}^{(C)}}\stackrel{D}{=}\
     \Phi\left\{
     \sqrt{\frac{a\beta}{\alpha t}}Z+
     \Phi^{-1}(p)\sqrt{1+\frac{a/\alpha}{1+t/\beta}}
     \right\}.
     \]
      However,  $\rhat_p^{(C)}/r_p^{(C)} -1=\mathcal{O}(1/\sqrt{C})$; moreover
\[
      \frac{\rhat_{1-p/2}^{(C)}-\rhat_{p/2}^{(C)}}
      {r_{1-p/2}^{(C)}-r_{p/2}^{(C)}}
      \sim
      \sqrt{1+\frac{\alpha/a}{1+t/\beta}}.
      \]
\end{enumerate}
\end{theorem}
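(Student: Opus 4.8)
The plan is to treat both parts with one template: approximate the true sampling law of the target and the fitted predictive law by Gaussians, express the estimated quantile as ``predictive mean plus $\Phi^{-1}(p)$ times predictive standard deviation'', and then evaluate the true distribution function at that quantile. For Objective One the true law is $N_\bullet^{+(C)}\sim\mathsf{Po}(\lambda_\bullet t^+)$ at the realised aggregate rate $\lambda_\bullet$, with mean and variance both equal to $\mu_T:=\lambda_\bullet t^+$, while the predictive law is the negative binomial \eqref{eqn.NBpredictive} with moments $(\mu_P,\sigma_P^2)$ read off from \eqref{eqn.NBmoments}. Writing $\qhat_p^{(C)}\approx\mu_P+\Phi^{-1}(p)\sigma_P$ and using a Gaussian approximation to the Poisson tail gives $\Prob{N_\bullet^{+(C)}\le\qhat_p^{(C)}\mid\boldsymbol{\uN}^{(C)}}\approx\Phi\big((\qhat_p^{(C)}-\mu_T)/\sqrt{\mu_T}\big)$. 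Objective Two is structurally identical, with the true law $\mathsf{Gam}(n_\bullet^{+(C)},\lambda_\bullet)$ and the Pearson~VI predictive \eqref{eqn.PearsonVidens} whose moments are \eqref{eqn.PVImoments}.

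The first ingredient is control of the data-dependent parameters. Marginally the centre counts $N_c$ are i.i.d.\ negative-binomial (integrating $\mathsf{Po}(\lambda_c t)$ over $\lambda_c\sim\mathsf{Gam}(\alpha,\beta)$), so standard maximum-likelihood theory yields $(\alphahat,\betahat)\to(\alpha,\beta)$ in probability at rate $1/\sqrt C$. The algebraic engine is Lemma~\ref{lemma.MLEs}: from $\alphahat/\betahat=N_\bullet/(Ct)$ one obtains $C\alphahat+N_\bullet=N_\bullet(\betahat+t)/t$, which collapses \eqref{eqn.NBmoments} to $\mu_P=N_\bullet t^+/t$ and $\sigma_P^2=\mu_P(\betahat+t+t^+)/(\betahat+t)$. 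With $N_\bullet=\lambda_\bullet t+(N_\bullet-\lambda_\bullet t)$ and $\betahat\to\beta$, the leading $\mathcal{O}(C)$ parts of $\mu_P$ and $\mu_T$ (and of $\sigma_P^2$ and $\sigma_T^2=\mu_T$) can be lined up term by term.

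The core is the decomposition of the standardised gap,
\[
\frac{\qhat_p^{(C)}-\mu_T}{\sqrt{\mu_T}}=\frac{\mu_P-\mu_T}{\sqrt{\mu_T}}+\Phi^{-1}(p)\,\frac{\sigma_P}{\sqrt{\mu_T}}.
\]
The location term is driven purely by census-time Poisson noise: $\mu_P-\mu_T=(t^+/t)(N_\bullet-\lambda_\bullet t)$, and since $N_\bullet\mid\lambda_\bullet\sim\mathsf{Po}(\lambda_\bullet t)$ with $\lambda_\bullet t\to\infty$, the Poisson central limit theorem makes $(N_\bullet-\lambda_\bullet t)/\sqrt{\lambda_\bullet t}\Rightarrow Z$; after dividing by $\sqrt{\mu_T}=\sqrt{\lambda_\bullet t^+}$ this becomes $\sqrt{t^+/t}\,Z$. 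The scale term tends to $\sqrt{(\beta+t+t^+)/(\beta+t)}=\sqrt{1+(t^+/\beta)/(1+t/\beta)}$, precisely because $\sigma_P^2$ carries the posterior-uncertainty inflation factor $(\betahat+t+t^+)/(\betahat+t)$ that the pure-Poisson truth $\sigma_T^2=\mu_T$ lacks. Slutsky's theorem assembles the two, with the $\mathcal{O}(1/\sqrt C)$ fluctuation of $\betahat$ affecting only the scale factor at negligible order. The relative-error statement $\qhat_p^{(C)}/q_p^{(C)}-1=\mathcal{O}(1/\sqrt C)$ is then immediate, since the two quantiles agree at leading order $\lambda_\bullet t^+=\mathcal{O}(C)$ and differ at $\mathcal{O}(\sqrt C)$, and the interval-width ratio equals $\sigma_P/\sigma_T$ once the common factor $\Phi^{-1}(1-p/2)-\Phi^{-1}(p/2)$ cancels. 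Part~2 follows the same route, substituting the Gamma central limit theorem for the Poisson one and expanding \eqref{eqn.PVImoments} under $n_\bullet^{+(C)}/C\to a$ to obtain the constants $\sqrt{a\beta/(\alpha t)}$ and $\sqrt{1+(a/\alpha)/(1+t/\beta)}$.

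The step I expect to be the main obstacle is upgrading the two Gaussian approximations from distribution functions to quantiles with enough uniformity: I would establish a Berry--Esseen-type bound on the negative-binomial and Pearson~VI distribution functions as their parameters grow like $C$, and note that discreteness and the continuity correction perturb $\qhat_p^{(C)}$ by only $\mathcal{O}(1)$, which is $o(\sqrt C)$ against $\sqrt{\mu_T}$ and hence vanishes after standardisation. A closely related conceptual point is the correct identification of the truth as the law at the \emph{realised} $\lambda_\bullet$ rather than its posterior mixture: it is exactly the gap between this law and the predictive law --- which implicitly integrates over posterior uncertainty in $\lambda_\bullet$ --- that produces the scale inflation, so the two randomness sources (census Poisson noise, which supplies the location term, and maximum-likelihood error, which perturbs only the scale term) must be tracked separately throughout.
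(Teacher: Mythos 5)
Your proposal is correct and takes essentially the same route as the paper's proof: the same use of Lemma \ref{lemma.MLEs} to collapse the predictive moments to $\mu_P=N_\bullet t^+/t$ and $\sigma_P^2=\mu_P(\betahat+t+t^+)/(\betahat+t)$, the same decomposition of the standardised quantile gap into a census-noise location term ($\sqrt{t^+/t}\,Z$ via the CLT for $N_\bullet$) and an MLE-consistency scale term, the same Gaussian approximations to the predictive and true laws (the paper justifies these by writing $\Ntil_\bullet^+$ as a sum of $C$ i.i.d.\ negative binomials and, for the Pearson VI, as a ratio of gammas plus the delta method, where you invoke Berry--Esseen-type bounds), and the same argument for the $\mathcal{O}(1/\sqrt{C})$ relative error and interval-width ratio, with Part 2 handled analogously.
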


Theorem \ref{thrm.asympt} is proved in Appendix \ref{sec.proof}. We discuss the consequences for $N_\bullet^+$ in detail; those for $T^+$ are analogous.

Theorem \ref{thrm.asympt} confirms the intuition that the width of any confidence interval estimated using  $(\alphahat,\betahat)$ is wider than that which would be obtained were the total intensity, $\lambda_\bullet$, known precisely; however it also shows that the ratio approaches $1$ as the census time increases.
More importantly, for the median, Theorem \ref{thrm.asympt} suggests that
$\Prob{N_\bullet^+\le \qhat_{0.5}}\approx \Phi(\sqrt{t^+/t}Z)$, so that when $t^+\approx t$, this probability is approximately uniformly distributed on $[0,1]$. By contrast, when $t^+<<t$ the probability concentrates at $\approx 0.5$ as is desirable, and when $t^+>>t$ the probability concentrates around $0$ and $1$ each with a mass of $0.5$, which is not desirable. The theoretical densities for $\Prob{N_\bullet^+\le \qhat_{0.5}}$ as a function of $t$ (with $t^+=400-t)$ are given in  Figure \ref{fig:change_t_m_formula}. For more general quantiles, with $t$ fixed, as $t^+\rightarrow 0$, the probability approaches a point mass at $p$ as desired, but as $t^+\rightarrow \infty$ the same concentration around $0$ and $1$ happens, however, the mass on $1$ is
$\mathbb{P}\left\{Z\ge -\sqrt{t/(\beta+t)} \Phi^{-1}(p)\right\}
=
\Phi\{\sqrt{t/(\beta+t)} \Phi^{-1}(p)\}$.

\begin{figure}[H]
    \centering
    \includegraphics[scale=0.75]{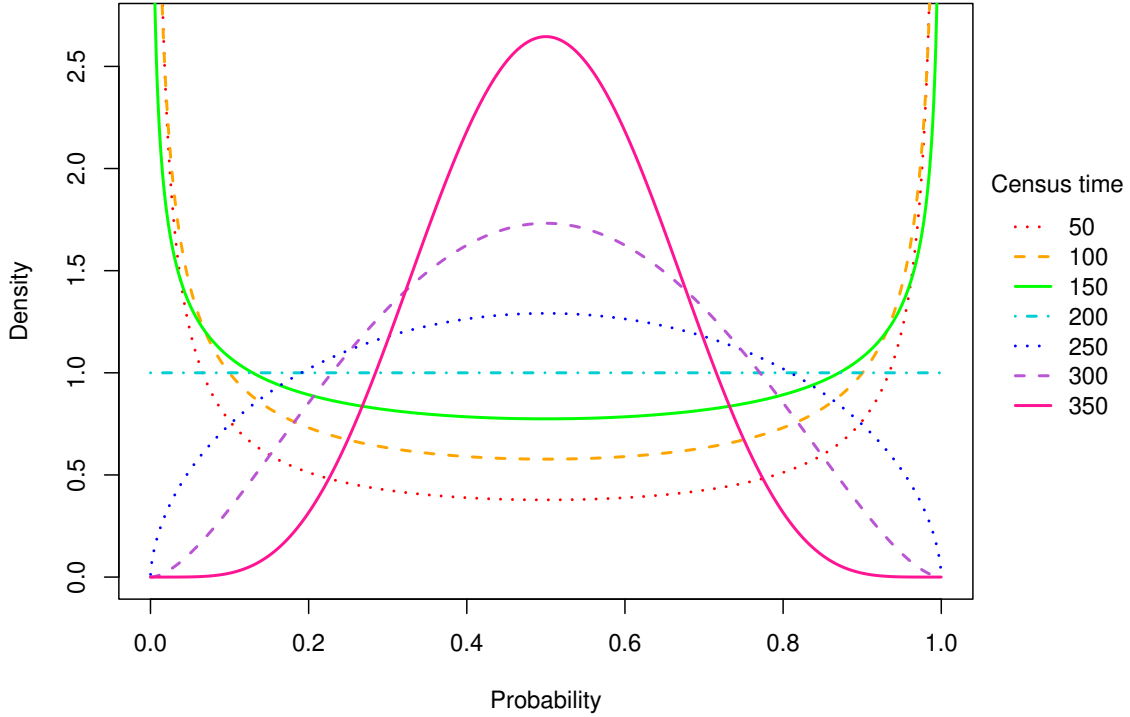}
    \caption{Theoretical density of $\Prob{N_\bullet^+\le \qhat_{0.5}}$ as a function of census time, $t$, with $t^{+}=400-t$, $\alpha=2$, $\beta=150$ and $C=150$.}
    \label{fig:change_t_m_formula}
\end{figure}

Despite this decidedly unintuitive behaviour of the quantile probabilities, Theorem \ref{thrm.asympt} also shows that the relative error in the quantile estimate decays in proportion to $1/\sqrt{C}$ as expected. The resolution of this apparent contradiction lies in the fact that whilst the quantiles for $N_\bullet^+$ and $\Ntil_\bullet^+$  themselves are $\mathcal{O}(C)$, both the discrepancy between them \emph{and} the widths of the distributions are $\mathcal{O}(\sqrt{C})$. The discrepancy between the quantiles also decreases to $0$ as $t^+/t\downarrow 0$, so depending on this ratio the two distributions can closely overlap or almost entirely diverge ($t^+>>t$).

Thus, even though the point estimate of a quantile may be accurate relative to the size of the quantile ($\mathcal{O}(\sqrt{C})$ compared with $\mathcal{O}(C)$), unless $t^+<<t$, prediction intervals will not, in general, provide the intuitive and desirable coverage properties: $\Prob{\qhat_{0.05}\le N_\bullet^+\le \qhat_{0.95}}\approx 0.9$, for example. However, the (asymptotically) correct coverage can be recovered by adjusting the interval, based on Theorem \ref{thrm.asympt}, as we now describe.

Theorem \ref{thrm.asympt} suggests that to obtain a predictive value with the true (asymptotic in $C$) probability $p$ of it not being exceeded, we must target a value $p^*$ such that
      \[
p=
\Expect{      \Phi\left\{
  \sqrt{\frac{t^+}{t}}Z+\Phi^{-1}(p^*)\sqrt{\frac{\beta+t+t^+}{\beta+t}}
      \right\}}.
      \]
Writing $b$ for $\Phi^{-1}(p^*)\sqrt{(\beta+t+t^+)/(\beta+t)}$ and letting $Z'\sim \mathsf{N}(0,1)$ be independent of $Z$, the right hand side may be rewritten as
      \[
      \Prob{Z'\le \sqrt{\frac{t^+}{t}}Z+b}
      =
      \Prob{\sqrt{1+\frac{t^+}{t}}\mathsf{N}(0,1)\le b}
      =\Phi\left(\frac{b}{\sqrt{1+\frac{t^+}{t}}}\right).
      \]
      Rearranging gives
      \[
\sqrt{\frac{t+t^+}{t}}\Phi^{-1}(p)=\sqrt{\frac{\beta+t+t^+}{\beta+t}}\Phi^{-1}(p^*),
\]
so
\begin{equation}
  \label{eqn.aimq}
  p^*=\Phi\left\{
\sqrt{\frac{(\beta+t)(t+t^+)}{t(\beta+t+t^+)}}\Phi^{-1}(p)
  \right\} .
\end{equation}
In practice we do not know $\beta$, and necessarily substitute $\betahat$ for this value. The estimator $\betahat$ is consistent for $\beta$, and so we might expect this approximation to be reasonable. Section \ref{sec.sim.intervals} provides empirical verification that adjustments based on this approximation lead to substantial improvements in coverage.  

\subsection{Different centre opening times}
\label{sec.theory.difft}
We now consider the scenario where $t_1=\dots=t_C$ does not hold. In this case the posterior for $\lambda_\bullet$ is intractable and, hence, so are the distributions for $\Ntil_\bullet^+$ and $\Ttil^+$. Furthermore, Lemma \ref{lemma.MLEs} does not hold.

Although the distribution of $\lambda_\bullet$ is intractable, its moments are not:
\[
\Expect{\lambda_\bullet}
=
\sum_{c=1}^C\frac{\alphahat+n_c}{\betahat+t_c}
~~~
\mbox{and}
~~~
\Var{\lambda_\bullet}
=
\sum_{c=1}^C\frac{\alphahat+n_c}{(\betahat+t_c)^2}.
\]

\begin{theorem}
\label{thrm.mgf}
    Let \(X_1,\dots,X_n\) be independent random variables
    with \(X_i\sim \mathsf{Gam}(\alpha_i,\beta_i)\). Define
    \(S_n=\sum_{i=1}^nX_i\), \(\mu_n=\mathbb{E}[S_n]\) and
    \(\sigma^2_n=\mathsf{Var}[S_n]\). Consider two approximations to \(S_n\)
    obtained by matching the first two moments:
    \begin{align*}
    Z&\sim \mathsf{N}(\mu_n,\sigma^2_n)\\
    G&\sim \mathsf{Gam}(\alpha,\beta),
    \end{align*}
    with \(\alpha/\beta=\mu_n\) and \(\alpha/\beta^2=\sigma^2_n\). Denoting
    the \(j\)the cumulants of \(S_n\), \(Z\) and \(G\) by \(\kappa_j^{S_n}\),
    \(\kappa_j^Z\) and \(\kappa_j^G\), respectively,
    \(0<\kappa_1^{S_n}=\kappa_1^{Z}=\kappa_1^G=\mu_n\) and
    \(0<\kappa_2^{S_n}=\kappa_2^{Z}=\kappa_2^G=\sigma^2_n\) by design, and
    the following holds for all \(j\ge 3\):
    \begin{align*}
    0=\kappa_j^{Z}< \kappa_j^G \le \kappa_j^{S_n}.
    \end{align*}

\end{theorem}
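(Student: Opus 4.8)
The plan is to compute all the cumulants explicitly, reduce the one nontrivial inequality to a statement about a moment sequence, and then close it with Hölder's inequality. Recall that if $X\sim\mathsf{Gam}(a,b)$ (shape $a$, rate $b$) then its cumulant generating function is $-a\log(1-s/b)=\sum_{j\ge 1}\frac{a}{j}(s/b)^j$, so its $j$th cumulant is $a(j-1)!/b^j$. The Gaussian half of the claim is then immediate, since a normal distribution has vanishing cumulants of every order $j\ge 3$: we get $\kappa_j^Z=0$, whereas $\kappa_j^G=\alpha(j-1)!/\beta^j>0$ because $\alpha,\beta>0$, which establishes $0=\kappa_j^Z<\kappa_j^G$. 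Because cumulants are additive over independent summands, I also obtain $\kappa_j^{S_n}=(j-1)!\sum_{i=1}^n \alpha_i/\beta_i^j$.

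The substance is the inequality $\kappa_j^G\le\kappa_j^{S_n}$. Writing $m_k:=\sum_{i=1}^n \alpha_i/\beta_i^k$, the moment-matching conditions read $\alpha/\beta=\mu_n=m_1$ and $\alpha/\beta^2=\sigma_n^2=m_2$, so $\beta=m_1/m_2$ and $\alpha=m_1^2/m_2$. Substituting gives $\alpha/\beta^j=m_2^{j-1}/m_1^{j-2}$, and after cancelling the common factor $(j-1)!$ the desired inequality becomes
\[
m_2^{\,j-1}\le m_1^{\,j-2}\,m_j,\qquad j\ge 3.
\]
The key observation is that $(m_k)_{k\ge 1}$ is exactly the sequence of moments of the positive (unnormalised) measure $\nu:=\sum_{i=1}^n \alpha_i\,\delta_{1/\beta_i}$, that is $m_k=\int y^k\,d\nu(y)$.

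The one nonmechanical idea—and hence the main obstacle—is to recognise that the matched-gamma cumulant ratio is precisely the log-convexity gap of this moment sequence. Concretely, I would write the intermediate index as a convex combination $2=(1-\theta)\cdot 1+\theta\cdot j$ with $\theta=1/(j-1)$, and apply Hölder's inequality to $\nu$ with exponents $p=1/(1-\theta)$ and $q=1/\theta$, using $y^2=(y)^{1-\theta}(y^j)^{\theta}$, to obtain $m_2\le m_1^{1-\theta}m_j^{\theta}$; raising to the power $j-1$ yields $m_2^{j-1}\le m_1^{j-2}m_j$ as required. Finally, equality in Hölder holds only when all atoms $1/\beta_i$ coincide, i.e. when every $\beta_i$ is equal; in that degenerate case $S_n$ is itself exactly $\mathsf{Gam}(\sum_i\alpha_i,\beta)$ and the matched gamma reproduces it, which is consistent with the non-strict $\le$. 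Chaining these steps gives $0=\kappa_j^Z<\kappa_j^G\le\kappa_j^{S_n}$ for all $j\ge 3$.
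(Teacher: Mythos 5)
Your proposal is correct, and while the mechanical parts (cumulants of the normal, the gamma, and the sum, plus the reduction of $\kappa_j^G\le\kappa_j^{S_n}$ to $m_2^{j-1}\le m_1^{j-2}m_j$ with $m_k=\sum_i\alpha_i/\beta_i^k$) coincide with the paper's argument, your treatment of the key inequality is genuinely different. The paper phrases $m_1,m_2,m_j$ as expectations of coupled random variables $C,D$ taking values $\alpha_i/\beta_i$ and $\alpha_i/\beta_i^2$ with weight $1/n$, normalises so $\mathbb{E}[C]=\mathbb{E}[D]=1$, and proves $\mathbb{E}\bigl[D^{j-1}/C^{j-2}\bigr]\ge 1$ by strong induction on $j$, using a single Cauchy--Schwarz step per stage and handling odd and even $j$ separately (reducing $j$ to $(j+1)/2$ or $j/2+1$ respectively). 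You instead recognise $(m_k)$ as the moment sequence of the atomic measure $\nu=\sum_i\alpha_i\,\delta_{1/\beta_i}$ and obtain $m_2\le m_1^{(j-2)/(j-1)}m_j^{1/(j-1)}$ in one stroke from H\"older's inequality with exponents $p=(j-1)/(j-2)$, $q=j-1$; this is exactly the standard log-convexity of moment sequences. Your route is shorter and conceptually crisper (it names the inequality rather than rediscovering it), and it also yields the equality characterisation (all $\beta_i$ equal) for free; the paper's route buys elementarity, in that it never invokes anything beyond Cauchy--Schwarz, at the price of an induction with parity bookkeeping --- indeed the paper's induction can be read as building the needed instance of H\"older out of repeated Cauchy--Schwarz. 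Both are valid; your H\"older application checks out ($(1-\theta)\cdot 1+\theta\cdot j=2$ with $\theta=1/(j-1)$, and H\"older holds for arbitrary positive measures, so the lack of normalisation of $\nu$ is immaterial).
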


Theorem \ref{thrm.mgf} is proved in Appendix \ref{sec.proof.2}. Since the moment generating function of a random variable is \(M(t)=\exp\{K(t)\}\), where \(K(t)\) is the cumulant generating function, the coefficient of \(t^n\) in \(M(t)\) is a linear combination products of the cumulants
\(\kappa_1,\dots,\kappa_n\) where all coefficients are positive. This
immediately leads to the following:

\begin{corollary}
  \label{corollary}
    With \(S_n\), \(Z\) and \(G\) as defined in Theorem
    \ref{thrm.mgf}, \(\mathbb{E}[Z^j]=\mathbb{E}[G^j]=\mathbb{E}[S_n^j]\) for
    \(j=1,2\), and for all integer \(j\ge 3\),
    \begin{align*}
    \mathbb{E}[Z^j]<\mathbb{E}[G^j]\le \mathbb{E}[S_n^j].
    \end{align*}
\end{corollary}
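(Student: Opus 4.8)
The plan is to transport the cumulant ordering already supplied by Theorem~\ref{thrm.mgf} to the raw moments through the moment--cumulant relationship, so that the corollary becomes a monotonicity statement about a single fixed polynomial. Writing the moment generating function as $M(t)=\exp\{K(t)\}$ with $K(t)=\sum_{k\ge1}\kappa_k t^k/k!$, I would expand the exponential and read off $\Expect{X^j}=j!\,[t^j]M(t)=B_j(\kappa_1,\dots,\kappa_j)$, where $B_j$ is the complete Bell polynomial. The two structural facts I would isolate are that $B_j$ is a polynomial in $\kappa_1,\dots,\kappa_j$ all of whose coefficients are nonnegative (each monomial $\prod_k\kappa_k^{m_k}$ with $\sum_k k\,m_k=j$ carries a positive multinomial weight), and that it contains the top-order linear term $\kappa_j$ with coefficient exactly one.

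The cases $j=1,2$ are then immediate, since $\Expect{X}=\kappa_1$ and $\Expect{X^2}=\kappa_2+\kappa_1^2$, and $\kappa_1,\kappa_2$ agree across $S_n$, $Z$ and $G$ by the moment matching built into Theorem~\ref{thrm.mgf}. For $j\ge3$ I would evaluate $\Expect{Z^j}$, $\Expect{G^j}$ and $\Expect{S_n^j}$ as the one polynomial $B_j$ applied to the three respective cumulant vectors. All three share the common first two entries $(\kappa_1,\kappa_2)=(\mu_n,\sigma_n^2)$, while Theorem~\ref{thrm.mgf} gives, for every order $3\le k\le j$, the chain $0=\kappa_k^{Z}<\kappa_k^{G}\le\kappa_k^{S_n}$. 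Because $B_j$ has nonnegative coefficients it is nondecreasing in each of its arguments, so replacing the $Z$-cumulants by the larger $G$-cumulants, and these in turn by the still larger $S_n$-cumulants, can only increase the value, yielding $\Expect{Z^j}\le\Expect{G^j}\le\Expect{S_n^j}$.

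It remains to upgrade the first inequality to a strict one, and this is the step I would treat most carefully. Here the isolated linear term is decisive: passing from $Z$ to $G$ replaces the zero cumulants $\kappa_3^{Z},\dots,\kappa_j^{Z}$ by the strictly positive $\kappa_3^{G},\dots,\kappa_j^{G}$, and since every affected monomial of $B_j$ has a nonnegative coefficient while the monomial $\kappa_j$ has coefficient one, the net change is at least $\kappa_j^{G}>0$, giving $\Expect{Z^j}<\Expect{G^j}$. I expect the main obstacle to be one of rigour rather than of ideas: one must confirm that \emph{all} Bell-polynomial coefficients are nonnegative, so that monotonicity holds jointly in every cumulant and not merely in the leading one, and that the coefficient of $\kappa_j$ is strictly positive, so that the strict gap at order $j$ is not cancelled. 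Both are standard properties of the complete Bell polynomials, and once they are invoked the corollary follows directly from the cumulant ordering of Theorem~\ref{thrm.mgf}.
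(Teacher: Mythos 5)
Your proposal is correct and takes essentially the same approach as the paper: the paper likewise writes $M(t)=\exp\{K(t)\}$, notes that the coefficient of $t^j$ in $M(t)$ is a polynomial in $\kappa_1,\dots,\kappa_j$ with all coefficients positive, and lets the cumulant ordering of Theorem \ref{thrm.mgf} transfer immediately to the integer moments. Your write-up simply makes explicit what the paper treats as immediate, namely identifying this polynomial as the complete Bell polynomial and securing strictness of $\mathbb{E}[Z^j]<\mathbb{E}[G^j]$ via the unit-coefficient linear term $\kappa_j$ (with the tacit, and here satisfied, requirement that all cumulants involved are nonnegative so that monotonicity applies).
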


Theorem \ref{thrm.mgf} and Corollary \ref{corollary} show that a moment-matched gamma approximation to \(S_n\) is, in a sense, strictly better than the moment-matched Gaussian approximation available through the central limit theorem. We, therefore, make the approximation \citep[see also Lemma 2.2 in][]{anisimov_2011} that
\[
\lambda_\bullet
\stackrel{D}{\approx}\lambda^*_\bullet
\sim
\mbox{Gam}(C\alphahat +n^*_\bullet,\betahat+t^*),
\]
where $n^*_\bullet$ and $t^*$ are chosen so that the first two moments of $\lambda^*_\bullet$ match those of $\lambda_\bullet$. Figure \ref{fig:gamma_sum} in Appendix \ref{sec.supp.gamm.app}, and the accompanying text, demonstrate the accuracy of this approximation for two scenarios relevant to trial recruitment that we will describe in Section \ref{sec.sim.intervals}.

The posterior distribution for $\lambda^*_\bullet$ is exactly that which would arise given the $\mbox{Gam}(C\alphahat,\betahat)$ prior if each centre had been open for the same time of $t^*$ and a total of $n_\bullet^*$ patients had been recruited. Thus, if the MLEs from this `data', $\alphahat^*$ and $\betahat^*$ were to satisfy
$\alphahat^*=\alphahat$ and $\betahat^*=\betahat$ then the theory from Section \ref{sec.theory.simult} would follow through exactly. In reality, whatever the partitioning of $n_\bullet^*$ across centres, the data would typically lead to slightly different MLEs $\alphahat^*\ne \alphahat$ and $\betahat^*\ne \betahat$; nevertheless, in the proof of Theorem \ref{thrm.asympt} the most important aspect of the MLEs is their ratio. From Lemma \ref{lemma.MLEs}, 
$\alphahat^*/\betahat^*=n_\bullet^*/Ct^*$, and empirical comparisons of $n_\bullet^*/Ct^*$ against $\alphahat/\betahat$ (see Appendix \ref{sec.supp.gamm.app})  showed a relative error of less than $0.1\%$. 

The methodology for constructing prediction intervals for either $\Ntil_\bullet^+$ or $\Ttil^+$ then proceeds as in Section \ref{sec.theory.simult}, using $\alphahat$ and $\betahat$ under the assumption that $\lambda_\bullet\equiv\lambda^*_\bullet$.

\section{Empirical verification of theory and methodology}
\label{sec.simulate}
Simulations were carried out to test the asymptotic theory and methods proposed in this paper for finite numbers of centres, $C$. A large number (20000 unless otherwise stated) of realisations of the parameters $\lambda_1,\dots,\lambda_C$, and hence the sample $(n_{1} , \ldots , n_{C})$ were simulated for a given set of parameter values. For each realisation, the parameters $\alpha$ and $\beta$ were estimated using maximum likelihood and the quantile of interest, $q_p$ or $r_p$ was estimated. Either $\Prob{N_\bullet^+\le \qhat_p}$ or $\Prob{T^+\le \rhat_p}$ was then calculated exactly using the known (simulated) $\lambda_1,\dots,\lambda_C$. The results outlined below will primarily focus on predicting $N_\bullet^+$. 

Unless specified otherwise, the following parameter values were used: $\alpha = 2$, $\beta = 150$, $C = 150$, $t=200$. The latter two values are the defaults used when considering varying census times and centre numbers respectively.

When predicting $N_\bullet^+$, the total trial length was set to $\tau=t+t^+=400$, since 
with the default $C$, $\Expect{N_\bullet+N_\bullet^+}=C(\alpha/\beta)(t+t^+)=800$, a reasonable size for a Phase III clinical trial. Furthermore, the census time $t$ was chosen from $\boldsymbol{\mathbb{T}_{1}} = \{50, 100, 150, 200, 250, 300, 350\}$ and the number of centres, $C$, was chosen from $\boldsymbol{\mathbb{C}_{1}} = \{20, 50, 100, 150, 200, 250, 300, 400\}$. When examining predictions of $T^+$ we fixed $n^+_\bullet=200$ and selected
$t\in \boldsymbol{\mathbb{T}_{2}} = \{50, 100, 150, 200, 300,\\ 500, 1000\}$ and
$C\in \boldsymbol{\mathbb{C}_{2}} = \{20, 50, 100, 150, 200, 300, 500, 1000\}$.

When conducting simulations with varying number of centres, we set $\beta=C$ to maintain a fixed expected number of recruits per unit time. In Appendix \ref{sec.addn.simthrm} we explore an alternative scenario where $\beta=150$ is fixed as $C$ varies.

\subsection{Verification of Theorem \ref{thrm.asympt}}
\label{sec.sim.thrm}

\begin{figure}[h]
    \centering
    \begin{tabular}{cc}
    \includegraphics[width = 0.48 \textwidth, height = 7.5cm]{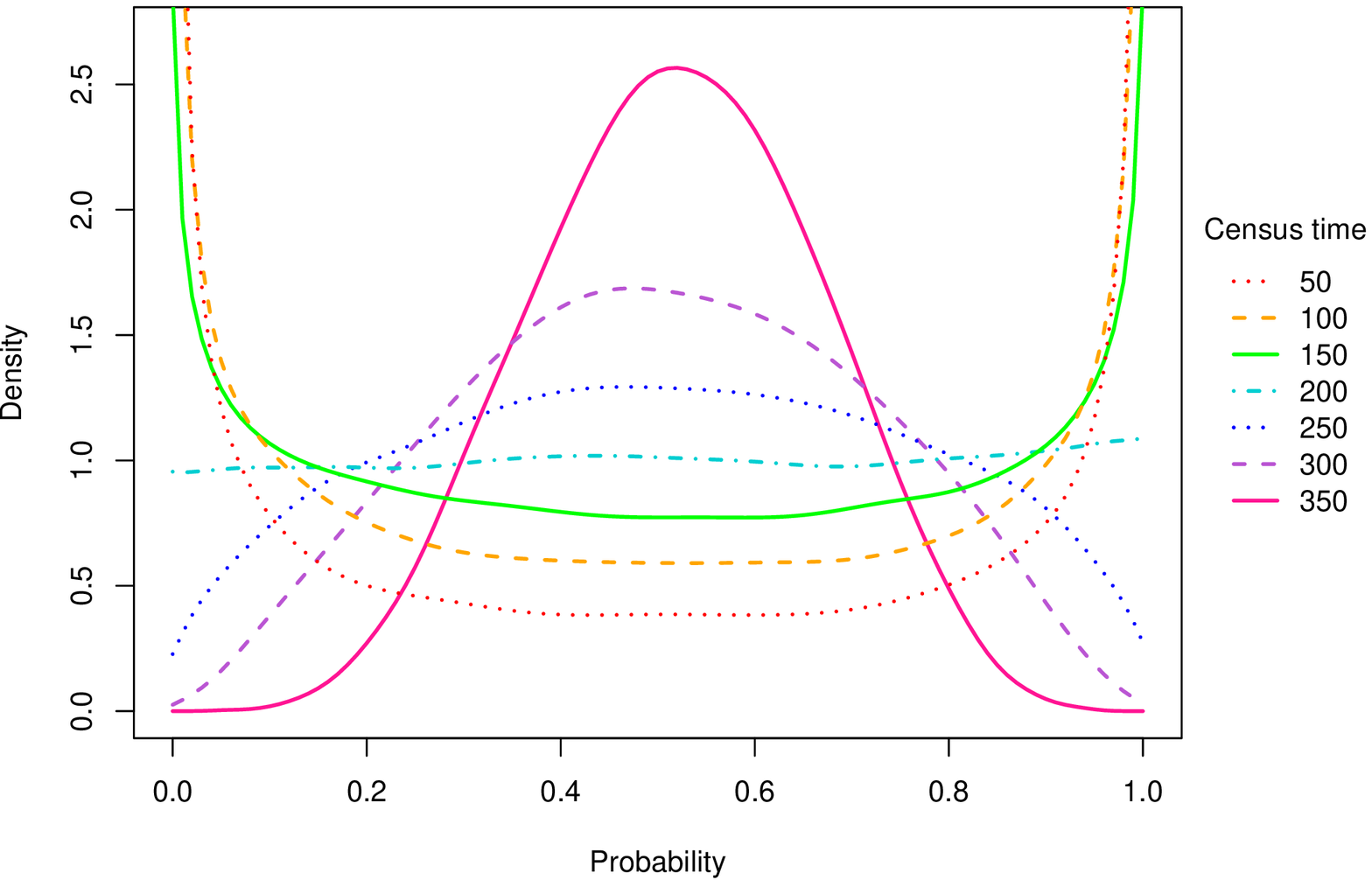} & \includegraphics[width = 0.48 \textwidth, height = 7.5cm]{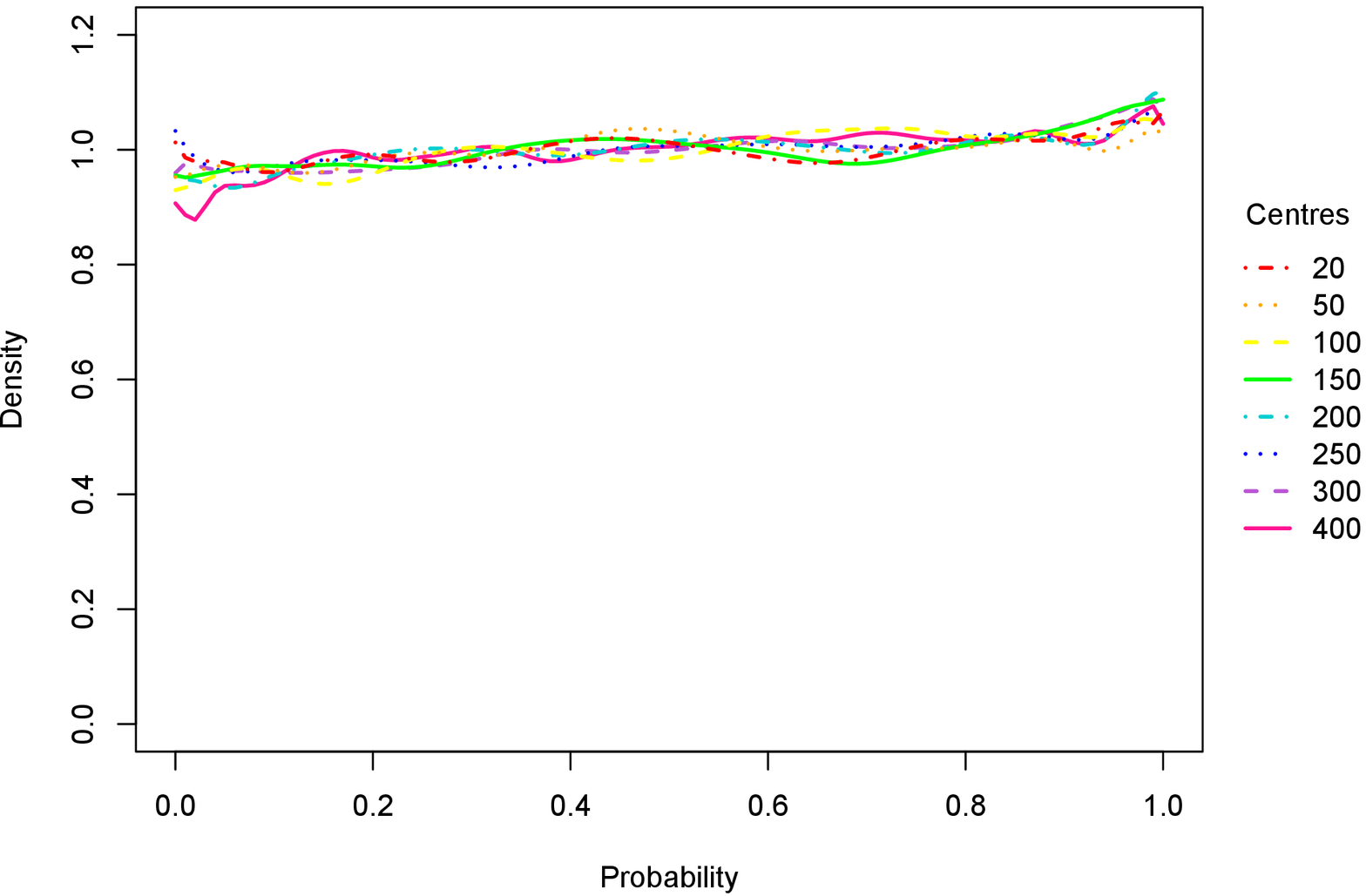}
    \end{tabular}
    \vspace{1em}
    \caption{Estimated density (over repeated sampling) of $\Prob{N_\bullet^+\le \qhat_{0.5}}$ for each $t \in \boldsymbol{\mathbb{T}_{1}}$ with $t^{+}=400-t$ (left) and for each $C \in \boldsymbol{\mathbb{C}_{1}}$ with $t=t^{+}=200$ and $\beta = C$ (right).}
    \label{fig:change_t_m}
\end{figure}

Figure \ref{fig:change_t_m} shows the empirical distribution of $\Prob{N_\bullet^+\le \qhat_{p}}$ over repeated simulation and, hence, estimates $\qhat_p$, for the median, $p=0.5$. The left panel varies the census times $t\in \boldsymbol{\mathbb{T}_1}$, whilst the right panel fixes $t$ (and hence $t^+=\tau-t$) and varies the number of centres, $C\in \boldsymbol{\mathbb{C}_1}$. The shape of the density function for $\Prob{N_\bullet^+\le \qhat_p}$ depends on the ratio of $t^+/t$ and shows very little variation with $C$, just as described in Section \ref{sec.theory.simult}, and matching almost perfectly  the relevant theoretical curves in Figure \ref{fig:change_t_m_formula}. In particular, when $t=t^+$, as in all cases in the right panel, the distribution is very close to uniform, empirically verifying the, perhaps unintuitive, result that increasing the number of centres in the trial, thus increasing the sample size upon which the MLEs are based, does not affect the accuracy of the quantile estimates. The theory predicts that the lines in the right panel should be horizontal; however, there is a slight positive gradient. This is because the theory is based on a continuous approximation whereas $N_\bullet^+$ is a discrete random variable. The density function for $\mathbb{P}(N_\bullet^+< \widehat{q}_p)$ (not shown) exhibits a slight negative gradient, supporting this explanation.

Figure \ref{fig:change_t_25} repeats Figure \ref{fig:change_t_m_formula} and the left panel of Figure \ref{fig:change_t_m} but for the $p=0.25$ quantile. Again, the empirical results match the theory almost perfectly. As with $p=0.5$, the estimate improves with increasing census time, but as predicted in Section \ref{sec.theory.simult}, when $t \ll t^{+}$, the mass is now not evenly distributed between the regions close to 0 and close to 1. 

\begin{figure}[h]
    \centering
    \includegraphics[scale = 0.53, trim={0 0.55cm 0 0},clip]{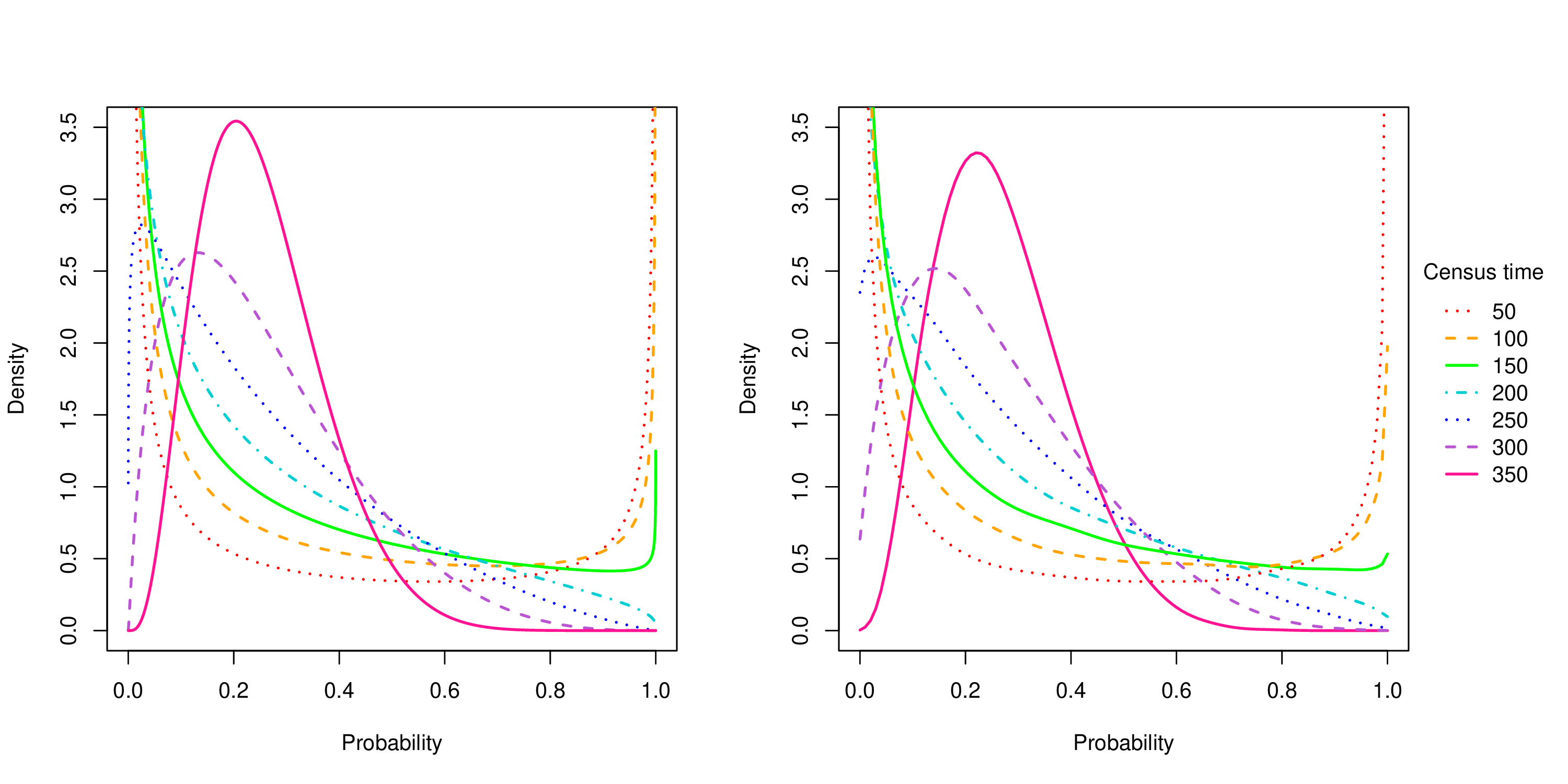} 
    \caption{Theoretical density (left) and estimated density over repeated sampling (right) of $\Prob{N^{+}_\bullet \leq \hat{q}_{0.25}}$ for each $t \in \boldsymbol{\mathbb{T}_{1}}$, with $t^{+}=400-t$.}
    \label{fig:change_t_25}
\end{figure}

When predicting quantiles for $T^+$, Theorem \ref{thrm.asympt} suggests that the accuracy of the quantile is primarily dependent on the ratio of $n_\bullet^+/n_\bullet$. Thus with a fixed $n_\bullet^+$ and $t$, and with $\beta=C$, there is essentially no change in the prediction accuracy; Figure \ref{fig:PVI_C} captures the close agreement between the theoretical predictions and empirical results in this case. 

\begin{figure}[H]
    \centering
        \includegraphics[scale = 0.53]{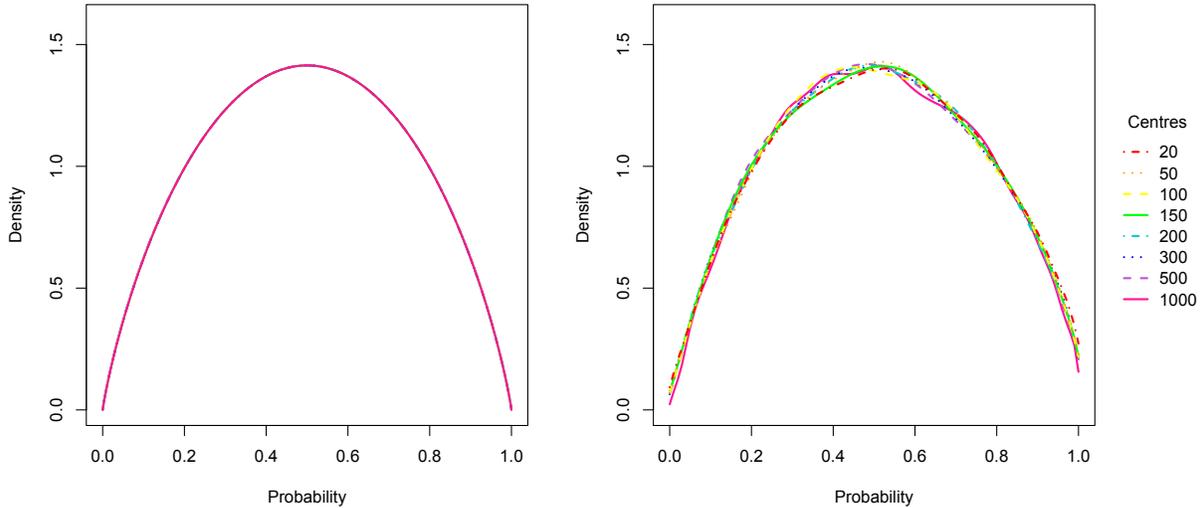}
    \caption{Theoretical density (left) and estimated density over repeated sampling (right) of $\Prob{T^{+} \leq \rhat_{0.5}}$ for each $C \in \boldsymbol{\mathbb{C}_{2}}$ with $n^{+}_\bullet=200$ fixed across all simulation runs. In the left-hand figure, all eight curves coincide.}
    \label{fig:PVI_C}
\end{figure}

Alternatively, with $\beta$ fixed, as the number of centres increases and/or as the census time increases, since each of these increases $n$, the  density curves for $\Prob{T^+\le \rhat_p}$ transition from a concentration at 0 and 1 to a point mass at $p$ (see Appendix \ref{sec.addn.simthrm} for theoretical predictions and empirical verification).  For further validation of Theorem \ref{thrm.asympt}, plots for $p=0.25$ with $t$ varying are also provided in Appendix \ref{sec.addn.simthrm}.

\subsection{Adjusted prediction intervals}
\label{sec.sim.intervals}

We now study empirically the effectiveness of using quantiles based on $p^*(p)$ to derive prediction intervals, and compare with intervals based directly on $p$.  At each simulation, a naive, unadjusted 90\% interval was estimated by calculating $\hat{q}_{p}$ for
$p=0.05$ and $p=0.95$. An adjusted $90\%$ interval was also derived by using $p^*(p)$ from \eqref{eqn.aimq} instead of $p$, both for $p=0.05$ and $p=0.95$.  The performance of the intervals was assessed for each method by calculating 
the mean, over 2000 simulations, of the true prediction interval coverage. The mean width of the prediction intervals was also recorded. We first consider the case were all centres opened simultaneously, then the case of different centre opening times.

\textbf{All centres opened simultaneously.} Table \ref{tab:quantile_adj_ninety} shows the results for each $t \in \boldsymbol{\mathbb{T}_1}$, and $t^{+} = \tau - t$. The unadjusted method gives satisfactory results for $t \ll t^{+}$ only, as is to be expected given Theorem 1. For all other scenarios, the quantiles are inaccurately estimated and the coverage can be far less than intended, as low as 63.7\% for a census time early on in the trial. Further diagnostics showed approximately equal contributions to undercoverage from $\qhat_{0.05}$ being too high and $\qhat_{0.95}$ being too low. In contrast,  by applying \eqref{eqn.aimq}, the coverage is consistently improved upon and corrected to almost exactly the desired 90\%. The improved coverage does come with a cost of an increased interval width, but the increase seems proportionate. 

\begin{table}[H]
\centering
\caption{The mean (over repeated sampling) of the true coverage probability and width of an intended 90\% prediction interval for $N^{+}_\bullet$ using the unadjusted and adjusted methods.\vspace{1em}}
\label{tab:quantile_adj_ninety}
\begin{tabular}{ccccc}
\hline
 & \multicolumn{2}{c}{\textbf{Unadjusted}} & \multicolumn{2}{c}{\textbf{Adjusted}} \\ \hline
 & Coverage  (\%) & $w$ & Coverage  (\%) & $w$ \\ \hline
$t=50, t^{+}=350$ & 63.7 & 140.5 & 89.1 & 245.6 \\ \hline
$t=100, t^{+}=300$ & 76.3 & 118.2 & 89.5 & 160.9 \\ \hline
$t=150, t^{+}=250$ & 81.9 & 99.0 & 89.5 & 120.0 \\ \hline
\multicolumn{1}{l}{$t=200, t^{+}=200$} & 84.9 & 82.2 & 89.6 & 92.9 \\ \hline
\multicolumn{1}{l}{$t=250, t^{+}=150$} & 86.9 & 66.6 & 89.8 & 72.0 \\ \hline
\multicolumn{1}{l}{$t=300, t^{+}=100$} & 88.2 & 51.3 & 89.8 & 53.6 \\ \hline
\multicolumn{1}{l}{$t=350, t^{+}=50$} & 89.2 & 34.5 & 89.9 & 35.1 \\ \hline
\end{tabular}
\end{table}

When $\beta=0$, \eqref{eqn.aimq} gives $p^*=p$: no correction is needed. We, therefore, also examined the effect of our adjustment when data are simulated using a much lower true parameter value, $\beta=50$. In this case, the lowest coverage was $77.8\%$, observed when $(t,t^+)=(50,350)$, improving to $90.2\%$ after our adjustment, whilst when $t=t^+=200$ the coverage improved from $84.1\%$ to $90.0\%$; the full tabulation is provided in Appendix \ref{sec.addn.sim.intervals}.

Similar improvements to those in Table \ref{tab:quantile_adj_ninety}, but for the $95\%$ prediction interval are also provided in Appendix \ref{sec.addn.sim.intervals}, confirming that the $p^{*}$ adjustment performs equally well when adjusting quantiles which are further into the tails of the distribution. A further table in Appendix \ref{sec.addn.sim.intervals} demonstrates an even more striking improvements than in Table \ref{tab:quantile_adj_ninety}, found when creating a $90\%$ predictive interval but with $C=20$; for example, when $(t,t^+)=(50,350)$ the coverage improved from $59.2\%$ to $89.7\%$.

\textbf{Different centre opening times.} 
We consider two different opening time scenarios: (1) the centre opening times are drawn uniformly and independently from the interval $[0,t]$, and (2) half of the centres are opened at time $0$ and half of the centres open at time $t$. The former mimics a gradual coming online of new centres, whilst the latter scenario could occur when an initial interim analysis suggests that many new centres must be opened to achieve the required sample size.

The investigation into quantile adjustment to obtain a 90\% prediction interval (Table \ref{tab:quantile_adj_ninety}) was repeated for opening time scenarios (1) and (2), and the results are provided in Tables \ref{tab:quantile_adj_diffc} and \ref{tab:halfhalft}, respectively. The prediction intervals for these cases were constructed according to the methodology of Section \ref{sec.theory.difft}. Additional diagnostics for the moment matching were also recorded: the mean (over repeated samples) of $t^*$, the ratio of this to the mean (over repeated samples) of the mean (over centres) of the $t_c$'s, and the ratio of the mean of the $n^*_\bullet$ to the mean of the $n_\bullet$. 

In both cases, the intervals obtained by combining the methodology proposed in Section \ref{sec.theory.difft} with \eqref{eqn.aimq} produce coverages very close to $90\%$, whatever the census time. By contrast the unadjusted intervals suffered from coverages as low as $48\%$ when $t=50$. Typically the values of $t^*$ and $n_\bullet^*$ are lower than $t$ and $n_\bullet$ (although their ratio is almost unchanged; see Section \ref{sec.theory.difft}), representing the increased uncertainty in parameter values because some centres have not been open for the full time interval. The especially poor coverage of the unadjusted intervals results because it is now the ratio $t^{+}/t^{*}$ that determines the extent of the undercoverage.

\begin{table}[h]
\centering
\caption{The mean (over repeated sampling) of the true coverage probability and width  of an intended 90\% prediction interval for $N^{+}_\bullet$ using the unadjusted and adjusted methods for opening time scenario (1). \vspace{1em}}
\label{tab:quantile_adj_diffc}
\begin{tabular}{cccccccc}
\hline
 & \multicolumn{1}{c}{} & \multicolumn{1}{c}{} & \multicolumn{1}{c}{} & \multicolumn{2}{c}{\textbf{Unadjusted}} & \multicolumn{2}{c}{\textbf{Adjusted}} \\ \hline
 & \multicolumn{1}{c}{$t^{*}$} & \multicolumn{1}{c}{$t^{*} / t_{c}$} & \multicolumn{1}{c}{$n_{\bullet}^{*} / n_{\bullet} $} & Coverage (\%) & $w$ & Coverage (\%) & $w$ \\ \hline
$t=50, t^{+}=350$ & 24.4 & 0.957 & 0.956 & 49.3 & 143.1 & 89.2 & 341.4 \\ \hline
$t=100, t^{+}=300$ & 46.5 & 0.921 & 0.920 & 65.0 & 125.3 & 89.6 & 220.3 \\ \hline
$t=150, t^{+}=250$ & 67.2 & 0.891 & 0.890 & 72.7 & 106.7 & 89.6 & 160.0 \\ \hline
\multicolumn{1}{l}{$t=200, t^{+}=200$} & 86.9 & 0.866 & 0.865 & 77.6 & 88.8 & 89.7 & 119.7 \\ \hline
\multicolumn{1}{l}{$t=250, t^{+}=150$} & 105.9 & 0.845 & 0.843 & 81.3 & 71.5 & 89.7 & 88.7 \\ \hline
\multicolumn{1}{l}{$t=300, t^{+}=100$} & 124.2 & 0.826 & 0.825 & 84.2 & 54.3 & 89.7 & 62.6 \\ \hline
\multicolumn{1}{l}{$t=350, t^{+}=50$} & 142.1 & 0.810 & 0.809 & 87.1 & 35.5 & 89.8 & 38.2 \\ \hline
\end{tabular}
\end{table}

\begin{table}[h]
\centering
\caption{The mean (over repeated sampling) of the true coverage probability and width  of an intended 90\% prediction interval for $N_\bullet^{+}$ using the unadjusted and adjusted method for opening time scenario (2). \vspace{1em}}
\label{tab:halfhalft}
\begin{tabular}{cccccccc}
\hline
 & \multicolumn{1}{l}{} & \multicolumn{1}{l}{} & \multicolumn{1}{l}{} & \multicolumn{2}{c}{\textbf{Unadjusted}} & \multicolumn{2}{c}{\textbf{Adjusted}} \\ \hline
 & \multicolumn{1}{c}{$t^{*}$} & \multicolumn{1}{c}{$t^{*} / t_{c}$} & \multicolumn{1}{c}{$n_{\bullet}^{*} / n_{\bullet} $} & Coverage (\%) & $w$ & Coverage (\%) & $w$ \\ \hline
$t=50, t^{+}=350$ & 21.7 & 0.867 & 0.863 & 48.1 & 145.1 & 89.1 & 360.4 \\ \hline
$t=100, t^{+}=300$ & 38.3 & 0.766 & 0.763 & 60.0 & 126.8 & 89.1 & 240.0 \\ \hline
$t=150, t^{+}=250$ & 51.2 & 0.683 & 0.679 & 66.7 & 108.7 & 89.0 & 179.0 \\ \hline
\multicolumn{1}{l}{$t=200, t^{+}=200$} & 61.7 & 0.612 & 0.614 & 71.1 & 90.9 & 88.9 & 136.0 \\ \hline
\multicolumn{1}{l}{$t=250, t^{+}=150$} & 70.1 & 0.561 & 0.558 & 75.3 & 73.4 & 89.0 & 101.3 \\ \hline
\multicolumn{1}{l}{$t=300, t^{+}=100$} & 77.0 & 0.513 & 0.511 & 79.6 & 55.6 & 89.4 & 70.8 \\ \hline
\multicolumn{1}{l}{$t=350, t^{+}=50$} & 82.8 & 0.473 & 0.471 & 84.2 & 36.2 & 89.6 & 41.8 \\ \hline
\end{tabular}
\end{table}

Equivalent tables for $T^+$ for opening time scenarios (1) and (2), presented in Appendix \ref{sec.addn.sim.intervals}, show similar dramatic improvements.

\FloatBarrier

\subsection{Application to clinical trial recruitment data}
\label{sec.AZdata}
Finally, we applied our methodology to recruitment data from an oncology clinical trial. The recruitment centres opened at different times, thus the methodology of Section \ref{sec.theory.difft} applies. For anonymisation reasons, all times in the data set were jittered by up to a week, and in the plot described below both time and cumulative recruitment have been rescaled to lie in the interval $[0,1]$.

We examined the 41 centres that had opened by time 0.125 and calculated 90\% prediction intervals for the total recruitment from these centres for the remainder of the recruitment period. With a single data set it is impossible to obtain true coverage probabilities, however, we can compare the predicted intervals with the true number recruited. Figure \ref{fig:AZ_data} shows the prediction intervals as dashed lines in red (unadjusted intervals) and dotted lines in blue (adjusted intervals) with the actual recruitment numbers shown as a solid black line. The recruitment goes outside of the unadjusted interval just before time 0.6 yet remains entirely within the adjusted prediction interval.

A diagnostic likelihood-ratio test \cite[see][]{szymon} with a null hypothesis  that the Poisson process is   time-homogeneous (as assumed by the model)  produced p-values of 0.39 (data up to the census time) and 0.50 (all data). Diagnostic Q-Q plots (see Appendix \ref{sec.addn.diagnostics}) suggest that the assumption of a gamma distribution in \eqref{eqn.GamHierarchy} is reasonable.

\begin{figure}[H]
    \centering
    \includegraphics[scale=0.55]{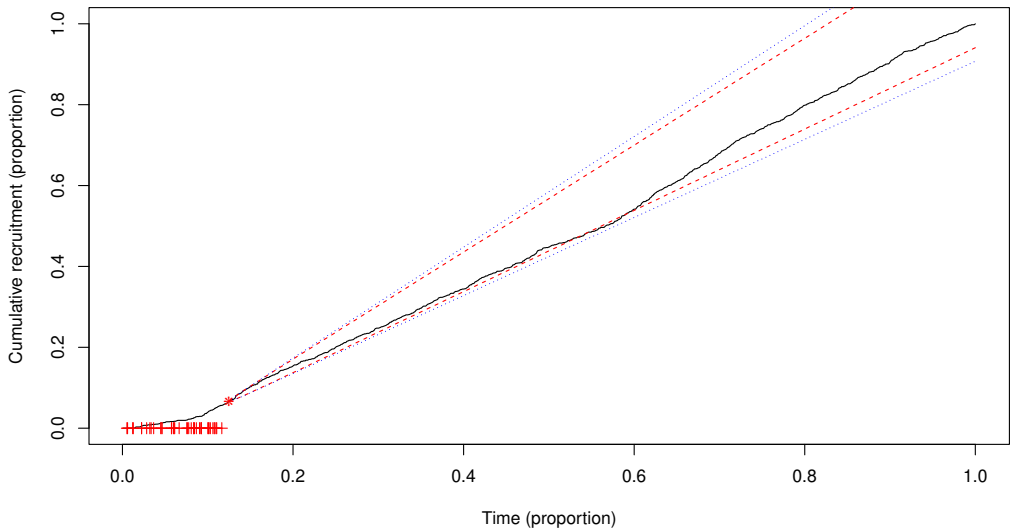}
    \caption{Recruitment (black, solid) for an oncology clinical trial. The estimated 90\% prediction intervals are shown by red dashed lines (unadjusted intervals) and dotted blue lines (adjusted intervals), and the ``$+$" symbols indicate centre opening times.}
    \label{fig:AZ_data}
\end{figure}

\section{Discussion}
\label{sec.discuss}
{We must start by pointing out that the model described in Section \ref{sec.setup} is just that: a model. The hierarchical nature allows the borrowing of information from centres that have been open for some time and enables sensible predictions for newly opened centres, the Poisson process is a reasonable first approximation for the recruitment process at an individual centre, and the gamma hierarchical distribution is chosen for tractability. The model does not account for the myriad logistical issues that might occur during a trial, affecting recruitment, and even were this not the case, data do not arise from the model. However, the model has gained  traction in the industry and has been developed further by a number of authors (see the introduction of this article).}

{Theorem \ref{thrm.asympt} first provides insight into when prediction intervals obtained by simply plugging in the  parameter point  estimates might be adequate; for example, when the future time horizon is small compared with the time for which the trial has been running. However, often the future time horizon is at least as long as the current length of the trial, and in this situation  the coverage of plug-in intervals is poor. The methodology resulting from Theorem \ref{thrm.asympt}, essentially, takes parameter uncertainty into account to create prediction intervals with almost exactly the intended coverage. } 

{Alternatives that allow parameter uncertainty to inform prediction intervals include Bayesian methods,  \cite[e.g.][]{szymon, zhang_long_2010} which are typically computationally expensive, or the bootstrap, which is usually even more expensive. Our method has the same cost as the standard plug-in, frequentist approach.}

{The diagnostics detailed at the end of Section \ref{sec.AZdata} suggest that the model of \cite{AF_2007} is suitable for the oncology data set which we examine, but this might not always be the case. The simulation study in Appendix \ref{gam.mix} suggests robustness to moderate departures from the hierarchical gamma distribution and robustness of improvements to the prediction intervals through our method. However, as demonstrated, for example, in \cite{szymon}, if the intensity curve for each centre is strongly time-dependent, predictions based on the assumption of a homogeneous Poisson process can be wildly inaccurate, and a time-inhomogeneous Poisson process might be more appropriate \citep[e.g.][]{Lan2019,szymon}. If the exact form of the time-inhomogeneity  is known then the standard  time transformation used for the Poisson process ($t\rightarrow \lambda_c \int_0^t  a(s) ds$, where $\lambda_c a(s)$ is intensity at true time $s$ for centre $c$), with one transformed time scale per centre, permits the application of our correction to predicting the number of new recruits in a given additional (true) time. However, the time-dependency  typically contains unknown parameters, and our correction as it stands cannot account for the uncertainty in these. Future research could look into extending our method to allow for this.}

{Theorem \ref{thrm.asympt}, upon which our prediction adjustment is based, describes the limit as the number of centres $C\rightarrow \infty$. Our simulations suggest that the approximation based on the limit result works well even when $C$ is as low as 20; however, it is unlikely to hold for very low centre numbers. Furthermore, experience has shown that for very low centre numbers it is possible for the likelihood to increase monotonically as $\alpha\rightarrow \infty$ and $\beta\rightarrow \infty$ with $\alpha/\beta$ fixed (this can occur when the counts for individual centres are under-dispersed). Relevant historical data might then be brought in to make parameter estimation more robust; however $C$ would still be low and the intended coverage might not be achieved.}

{This article has considered scenarios where centres can open at different times up until the census time, additional centres may be opened at the census time (perhaps driven by the results of the analysis) and we wish to predict the total recruitment for these centres into the future. A more general opening time scenario would also allow for centres coming online at different times after the census time. This could be incorporated into predictions of recruitment over the remainder of the recruitment period via a more general definition of $\lambda_\bullet$, which would become a weighted sum of the individual intensities, with a centre's weight being the fraction of the future time that it would be open for. How to deal with the converse problem in this scenario: predicting the time to recruit a certain number of patients, is an open problem.}

{Prediction using the model of \cite{AF_2007} relies on the true centre opening times, which are rarely known in
advance. There is often a plan and a back up plan, however, and it is straightforward \citep[see the appendix of][]{szymon} to combine the \cite{AF_2007} model with a standard survival model for the opening time of each centre conditional on the planned opening time and, potentially, other covariates. Alternatively, \cite{Lan2019} models centre opening times as realisations from an inhomogeneous Poisson process. With either of these approaches, once the model has been fitted using the data up to the census time, it is straightforward to repeatedly simulate sets of future opening times. One would then obtain a mixture of negative-binomial distributions for the distribution of the number of additional recruits over additional time $t^+$. The mixture could be approximated by a single negative-binomial distribution and our method applied directly to that. This would mainly be an extension of the model of \citet{AF_2007}, and would certainly be interesting to explore; the aim of this paper, however, is to analyse the existing method of \citet{AF_2007} and, in addition to new insights on performance, provide improved prediction intervals.}

\section*{Acknowledgements}

The first author acknowledges support from award: NIHR-MS-2016-03-01 Lancaster University.\vspace*{-8pt}

\bibliographystyle{abbrvnat}
\bibliography{bib}

\begin{thebibliography}{22}
\providecommand{\natexlab}[1]{#1}
\providecommand{\url}[1]{\texttt{#1}}
\expandafter\ifx\csname urlstyle\endcsname\relax
  \providecommand{\doi}[1]{doi: #1}\else
  \providecommand{\doi}{doi: \begingroup \urlstyle{rm}\Url}\fi

\bibitem[Abbas et~al.(2007)Abbas, Rovira, and Casanovas]{Abbas}
I.~Abbas, J.~Rovira, and J.~Casanovas.
\newblock Clinical trial optimization: Monte carlo simulation markov model for
  planning clinical trials recruitment.
\newblock \emph{Contemporary Clinical Trials}, 28\penalty0 (3):\penalty0 220 --
  231, 2007.

\bibitem[Akobeng(2005)]{gold_standard}
A.~K. Akobeng.
\newblock Understanding randomised controlled trials.
\newblock \emph{Archives of Disease in Childhood}, 90\penalty0 (8):\penalty0
  840--844, 2005.

\bibitem[Anisimov(2011)]{anisimov_2011}
V.~V. Anisimov.
\newblock Statistical modeling of clinical trials (recruitment and
  randomization).
\newblock \emph{Communications in Statistics - Theory and Methods}, 40\penalty0
  (19-20):\penalty0 3684–3699, 2011.

\bibitem[Anisimov and Fedorov(2007)]{AF_2007}
V.~V. Anisimov and V.~V. Fedorov.
\newblock Modelling, prediction and adaptive adjustment of recruitment in
  multicentre trials.
\newblock \emph{Statistics in Medicine}, 26\penalty0 (27):\penalty0
  4958–4975, Nov 2007.

\bibitem[Bakhshi et~al.(2013)Bakhshi, Senn, and Phillips]{Bakhshi}
A.~Bakhshi, S.~Senn, and A.~Phillips.
\newblock Some issues in predicting patient recruitment in multi-centre
  clinical trials.
\newblock \emph{Statistics in Medicine}, 32\penalty0 (30):\penalty0 5458--5468,
  2013.

\bibitem[Carlisle et~al.(2015)Carlisle, Kimmelman, Ramsay, and
  MacKinnon]{Carlisle}
B.~Carlisle, J.~Kimmelman, T.~Ramsay, and N.~MacKinnon.
\newblock Unsuccessful trial accrual and human subjects protections: An
  empirical analysis of recently closed trials.
\newblock \emph{Clinical Trials}, 12\penalty0 (1):\penalty0 77--83, 2015.

\bibitem[Carter(2004)]{Carter04}
R.~E. Carter.
\newblock Application of stochastic processes to participant recruitment in
  clinical trials.
\newblock \emph{Controlled Clinical Trials}, 25\penalty0 (5):\penalty0 429 --
  436, 2004.

\bibitem[Carter et~al.(2005)Carter, Sonne, and Brady]{Carter05}
R.~E. Carter, S.~C. Sonne, and K.~T. Brady.
\newblock Practical considerations for estimating clinical trial accrual
  periods: application to a multi-center effectiveness study.
\newblock \emph{BMC Medical Research Methodology}, 5\penalty0 (1), 2005.

\bibitem[Haidich and Ioannidis(2001)]{time_series}
A.-B. Haidich and J.~P. Ioannidis.
\newblock Determinants of patient recruitment in a multicenter clinical trials
  group: trends, seasonality and the effect of large studies.
\newblock \emph{BMC Medical Research Methodology}, 1\penalty0 (1), Jun 2001.

\bibitem[Huang et~al.(2018)Huang, Bull, McKee, Mahon, Harper, and
  Roberts]{Huang}
G.~D. Huang, J.~Bull, K.~J. McKee, E.~Mahon, B.~Harper, and J.~N. Roberts.
\newblock Clinical trials recruitment planning: A proposed framework from the
  clinical trials transformation initiative.
\newblock \emph{Contemporary Clinical Trials}, 66:\penalty0 74 -- 79, 2018.

\bibitem[Johnson et~al.(1994)Johnson, Kotz, and Balakrishnan]{PVI_distr}
N.~L. Johnson, S.~Kotz, and N.~Balakrishnan.
\newblock \emph{Continuous univariate distributions}.
\newblock Wiley series in probability and mathematical statistics. Wiley, New
  York, 2nd ed. edition, 1994.

\bibitem[Lai et~al.(2001)Lai, Moyé, Davis, Brown, and Sacks]{Lai01}
D.~Lai, L.~A. Moyé, B.~R. Davis, L.~E. Brown, and F.~M. Sacks.
\newblock Brownian motion and long-term clinical trial recruitment.
\newblock \emph{Journal of Statistical Planning and Inference}, 93\penalty0
  (1):\penalty0 239 -- 246, 2001.

\bibitem[Lamberti et~al.(2012)Lamberti, Mathias, Myles, Howe, and
  Getz]{Lamberti}
M.~J. Lamberti, A.~Mathias, J.~E. Myles, D.~Howe, and K.~Getz.
\newblock Evaluating the impact of patient recruitment and retention practices.
\newblock \emph{Drug Information Journal}, 46\penalty0 (5):\penalty0 573--580,
  2012.

\bibitem[Lan et~al.(2019)Lan, Tang, and Heitjan]{Lan2019}
Y.~Lan, G.~Tang, and D.~F. Heitjan.
\newblock Statistical modelling and prediction of clinical trial recruitment.
\newblock \emph{Statistics in Medicine}, 38\penalty0 (6):\penalty0 945--955,
  2019.

\bibitem[Mijoule et~al.(2012)Mijoule, Savy, and Savy]{Mijoule}
G.~Mijoule, S.~Savy, and N.~Savy.
\newblock Models for patients' recruitment in clinical trials and sensitivity
  analysis.
\newblock \emph{Statistics in Medicine}, 31\penalty0 (16):\penalty0 1655--1674,
  2012.

\bibitem[Senn(1997)]{Senn1997}
S.~Senn.
\newblock \emph{Statistical issues in drug development}.
\newblock Statistics in practice (Chichester, England). John Wiley, Chichester
  ; New York, 1997.

\bibitem[Tang et~al.(2012)Tang, Kong, Chang, Kong, and Costantino]{Tang2012}
G.~Tang, Y.~Kong, C.-C.~H. Chang, L.~Kong, and J.~P. Costantino.
\newblock Prediction of accrual closure date in multi-center clinical trials
  with discrete-time poisson process models.
\newblock \emph{Pharmaceutical Statistics}, 11\penalty0 (5):\penalty0
  351–356, 2012.

\bibitem[Treweek et~al.(2013)Treweek, Lockhart, Pitkethly, Cook, Kjeldstr{\o}m,
  Johansen, Taskila, Sullivan, Wilson, Jackson, Jones, and Mitchell]{Treweek}
S.~Treweek, P.~Lockhart, M.~Pitkethly, J.~A. Cook, M.~Kjeldstr{\o}m,
  M.~Johansen, T.~K. Taskila, F.~M. Sullivan, S.~Wilson, C.~Jackson, R.~Jones,
  and E.~D. Mitchell.
\newblock Methods to improve recruitment to randomised controlled trials:
  Cochrane systematic review and meta-analysis.
\newblock \emph{BMJ Open}, 3\penalty0 (2), 2013.

\bibitem[Urbas et~al.(2020)Urbas, Sherlock, and Metcalfe]{szymon}
S.~Urbas, C.~Sherlock, and P.~Metcalfe.
\newblock {Interim recruitment prediction for multi-center clinical trials}.
\newblock \emph{Biostatistics}, 09 2020.

\bibitem[Ying et~al.(2004)Ying, Heitjan, and Chen]{Ying}
G.~Ying, D.~F. Heitjan, and T.~Chen.
\newblock Nonparametric prediction of event times in randomized clinical
  trials.
\newblock \emph{Clinical Trials}, 1:\penalty0 352--362, 2004.

\bibitem[Zhang and Lai(2011)]{Zhang_Lai}
Q.~Zhang and D.~Lai.
\newblock Fractional brownian motion and long term clinical trial recruitment.
\newblock \emph{Journal of Statistical Planning and Inference}, 141\penalty0
  (5):\penalty0 1783 -- 1788, 2011.

\bibitem[Zhang and Long(2010)]{zhang_long_2010}
X.~Zhang and Q.~Long.
\newblock Stochastic modeling and prediction for accrual in clinical trials.
\newblock \emph{Statistics in Medicine}, 29\penalty0 (6):\penalty0 649–658,
  Mar 2010.

\end{thebibliography}

\newpage

\appendix
\appendixpage
\addappheadtotoc
\counterwithin{figure}{section}
\counterwithin{table}{section}

The proofs of Theorems \ref{thrm.asympt} and \ref{thrm.mgf} are given in Appendices \ref{sec.proof} and \ref{sec.proof.2} respectively. Appendix \ref{sec.supp.gamm.app} provides empirical evidence of the accuracy of the approximations used in Section \ref{sec.theory.difft} of the main article. Appendices \ref{sec.addn.simthrm}, \ref{sec.addn.sim.intervals} and \ref{sec.addn.diagnostics} contain additional material respectively for Sections \ref{sec.sim.thrm} \ref{sec.sim.intervals} and \ref{sec.AZdata} of the main article. Finally, Appendix \ref{gam.mix} investigates the robustness of the adjusted intervals to departures from the hierarchical gamma distribution used in the model.

\section{Proof of Theorem \ref{thrm.asympt}}
\label{sec.proof}
In this section, since all quantities except $\alpha$, $\beta$, $t$ and $t^+$ are functions of $C$, we suppress this superscript; further, since no random variables correspond to an individual centre (they are all totals) we suppress the subscript `$_\bullet$',  altering $\lambda_\bullet^{(C)}$ to $\lambda$, $N_\bullet^{(C)}$ to $N$, $N_\bullet^{+(C)}$ to $N^+$ and $\Ntil_\bullet^{+(C)}$ to $\Ntil^+$.
Further, since $\lambda \sim \mathsf{Gam}(C\alpha,\beta)$ and $N|\lambda \sim \mathsf{Po}(\lambda t)$, Chebyshev's inequality gives: $\lambda t/C\cip t\alpha/\beta$ and  $N/(\lambda t)\cip 1$, and hence $N/C\cip t\alpha/\beta$. Finally, by the Central Limit Theorem (CLT):
\begin{align}
  \label{eqn.CLTN}
Z':=  (N-\lambda t)/\sqrt{\lambda t}\Rightarrow Z&\sim \mathsf{N}(0,1).
\end{align}
We prove Parts 1 and 2 of the theorem separately. In each case we initially condition on the random variable $(\lambda,\boldsymbol{\uN})$; however, in the limit, the  probability of interest depends on this random variable only through $Z'\Rightarrow Z$.
\subsection{Proof of Part 1}
Combining Lemma \ref{lemma.MLEs} with \eqref{eqn.NBmoments} gives
\begin{equation}
  \label{eqn.NBmomentsN}
\Expect{\Ntil^+\mid \boldsymbol{\uN}}=\frac{Nt^+}{t}
~~~
\mbox{and}
~~~
\Var{\Ntil^+\mid \boldsymbol{\uN}}=\frac{Nt^+}{t}
\times \frac{\betahat+t+t^+}{\betahat+t}.
\end{equation}
Moreover, \eqref{eqn.NBmomentsN} gives
\begin{equation}
\label{eqn.NBvarlimit}
  \frac{\Var{\Ntil^+\mid \boldsymbol{\uN}}}{\Var{N^+\mid \lambda}}
=
\frac{N}{\lambda t}\times \frac{\betahat+t+t^+}{\betahat+t}
\cip
\frac{\beta+t+t^+}{\beta+t},
\end{equation}
since $\betahat\cip \beta$ by the asymptotic consistency of the MLE.

Conditional on $N$, let 
$\Ntil_c^+\sim \mathsf{NB}\left(\alphahat+N/C,\frac{t^+}{\betahat+t+t^+}\right)$ be independent. Then $\Ntil^+\stackrel{D}{=}\sum_{c=1}^C\Ntil^+_c$. Also $N^+|\lambda ~\sim \mathsf{Po}(\lambda t^+)$, so as $C\rightarrow \infty$, which implies $N/C\cip t\alpha/\beta$, the CLT gives 
\begin{align}
  \label{eqn.CLTNp}
  (N^+-\lambda t^+)/\sqrt{\lambda t^+}\mid \lambda&\Rightarrow \mathsf{N}(0,1),\\
  \label{eqn.CLTNtilp}
  (\Ntil^+-\Expect{\Ntil^+\mid \boldsymbol{\uN}})/\sqrt{\Var{\Ntil^+\mid \boldsymbol{\uN}}}\mid \boldsymbol{\uN}&\Rightarrow \mathsf{N}(0,1).
\end{align}

Substituting \eqref{eqn.CLTN} into \eqref{eqn.NBmomentsN}
\[
\frac{\Expect{\Ntil^+\mid \boldsymbol{\uN}}-\Expect{N^+\mid \lambda}}{\sqrt{\Var{N^+\mid \boldsymbol{\uN}}}}
=
\frac{\frac{t^+}{t}(\lambda t + \sqrt{\lambda t}Z')-\lambda t^+}{\sqrt{\lambda t^+}}
\Rightarrow
\sqrt{\frac{t^+}{t}}Z.
\]
Incorporating this with \eqref{eqn.CLTNtilp} and
\eqref{eqn.NBvarlimit}, the prediction of the $p$th quantile, $\qhat_p$, satisfies
\begin{align}
  \frac{\qhat_p-\Expect{N^+\mid \lambda}}{\sqrt{\Var{N^+\mid \lambda}}}
  ~\mid~N
  &\cip
  \frac{\Expect{\Ntil^+\mid \boldsymbol{\uN}}-\Expect{N^+\mid \lambda}+\Phi^{-1}(p)\sqrt{\Var{\Ntil^+\mid \boldsymbol{\uN}}}}{\sqrt{\Var{N^+\mid \lambda}}}\nonumber\\
  \label{eqn.qasymp}
  &\Rightarrow
\sqrt{\frac{t^+}{t}}  Z+\Phi^{-1}(p)\sqrt{\frac{\beta+t+t^+}{\beta+t}}.
\end{align}
From \eqref{eqn.CLTNp} and \eqref{eqn.qasymp}, the probability the true realisation is less than the predicted quantile is
\begin{align*}
  \Prob{N^+\le \qhat_p\mid \boldsymbol{\uN},\lambda}\cip
  \Phi\left(\frac{\qhat_p-\Expect{N^+\mid \lambda}}{\sqrt{\Var{N^+\mid \lambda}}}\right)
  \Rightarrow
  \Phi\left(\sqrt{\frac{t^+}{t}}Z+\Phi^{-1}(p)\sqrt{\frac{\beta+t+t^+}{\beta+t}}\right).
\end{align*}
Since this does not depend on $\lambda$,  it is also the limit of $\Prob{N^+\le \qhat_p\mid \boldsymbol{\uN}}$,
as required. Furthermore, from  \eqref{eqn.qasymp} and \eqref{eqn.CLTNp}, the  discrepancy between the quantile approximation and the true quantile satisfies
\[
\frac{\qhat_p-q_p}{\sqrt{\Var{N^+\mid \lambda}}}
\approx
\sqrt{\frac{t^+}{t}}Z + \Phi^{-1}(p)\left[\sqrt{\frac{\beta+t+t^+}{\beta+t}} -1\right].
\]
Since   $q_p/C\rightarrow \alpha/\beta$ and $\Var{N^+\mid \lambda}=\mathcal{O}(C)$, the relative discrepancy is $\mathcal{O}(1/\sqrt{C})$. Finally, the same data are used to estimate $q_{1-p/2}$ and $q_p$ so the value of $Z$ is the same, and  from \eqref{eqn.qasymp}, 
\[
\frac{\qhat_{1-p/2}-\qhat_{p/2} }{\sqrt{\Var{N^+\mid \lambda}}}
\approx
\left[ \Phi^{-1}(1-p/2)-\Phi^{-1}(p/2)\right]\sqrt{\frac{\beta+t+t^+}{\beta+t}}.
\]
The expression for the relative widths of the estimated and true confidence intervals then follows from \eqref{eqn.CLTNp}.

\subsection{Proof of Part 2}
Firstly, since $T^+|\lambda\sim \mbox{Gam}(n^+,\lambda)$,
\begin{equation}
  \label{eqn.EVarTplus}
  \Expect{T^+\mid \lambda}=\frac{n^+}{\lambda}=\frac{n^+/C}{\lambda/C}
  \cip \frac{a\beta}{\alpha}
  ~~~\mbox{and}~~~
  C\Var{T^+\mid \lambda}=C\frac{n^+}{\lambda^2}\cip \frac{a \beta^2}{\alpha^2}.
\end{equation}
Combining Lemma \ref{lemma.MLEs} with \eqref{eqn.PVImoments}and using the asymptotic consistency of the MLEs, 
\begin{align}
\label{eqn.ETpgN}
  \Expect{\Ttil^+\mid \boldsymbol{\uN}}&
=
\frac{(\betahat+t)n^+}{C\alphahat + N -1}
=
\frac{(\betahat+t)n^+t}{(\betahat+t)N-t}\cip \frac{a\beta}{\alpha}.\\
  \nonumber
  C\Var{\Ttil^+\mid \boldsymbol{\uN}}&=
  \Expect{\Ttil^+\mid \boldsymbol{\uN}}
  \times
  \frac{(\betahat+t)(\alphahat+N/C+n^+/C-1/C)}
       {(\alphahat+N/C-1/C)(\alphahat+N/C-2/C)}\\
\nonumber
       &\cip
       \frac{a\beta}{\alpha}
       \times \frac{(\beta+t)(\alpha+\alpha t/\beta+a)}{(\alpha+\alpha t/\beta)^2}\\
       \label{eqn.VarTplustil}
       &=
       \frac{a\beta^2(1+a/\alpha+t/\beta)}{\alpha^2(1+t/\beta)}.
       \nonumber
\end{align}
Thus
\begin{equation}
\label{eqn.var.T.ratio}
\frac{\Var{\Ttil^+\mid \boldsymbol{\uN}}}{\Var{T^+\mid \lambda}}
\cip
\frac{1+a/\alpha+t/\beta}{1+t/\beta}.
\end{equation}
Also, from the second equality in \eqref{eqn.ETpgN},
\begin{align}
\frac{\Expect{\Ttil^+\mid \boldsymbol{\uN}}-\Expect{T^+\mid \lambda}}{\sqrt{\Var{T^+\mid \lambda}}}
&=
\nonumber
\frac{1}{\sqrt{n^+/\lambda^2}}
\times
\left[\frac{(\betahat+t)n^+t}{(\betahat+t)N-t}-\frac{n^+}{\lambda}\right]\\
&=
\nonumber
\sqrt{n^+}\left[
\frac{(\betahat+t) (\lambda t-N)+t}{(\betahat+t)N-t}
\right]\\
&=
\nonumber
\sqrt{\frac{n^+/C}{\lambda t/C}}\left[
\frac{(\betahat+t) (\lambda t-N)/\sqrt{\lambda t}+t/\sqrt{\lambda t}}{(\betahat+t)N/(\lambda t)-t/(\lambda t)}
\right]\\
&\Rightarrow
-\sqrt{\frac{a\beta}{\alpha t}}
\times
Z,
\label{eqn.TtilEdiff}
\end{align}
by \eqref{eqn.CLTN}.
Now $\lambda T^+\sim \mathsf{Gam}(n^+,1)=\sum_{i=1}^{n^+}E_i$, where
the $E_i\sim \mathsf{Exp}(1)$ are independent and identically distributed, so the central limit theorem gives
\[
\frac{T^+-\Expect{T^+\mid \lambda}}{\sqrt{\Var{T^+\mid \lambda}}}
=
\frac{\lambda T^+-\Expect{\lambda T^+\mid \lambda}}{\sqrt{\Var{\lambda T^+\mid \lambda}}}
\Rightarrow
\mathsf{N}(0,1).
\]
Further, $\Ttil^+\mid \boldsymbol{\uN}\stackrel{D}{=}\mbox{Gam}(n^+,1)/\lambda\mid \boldsymbol{\uN}=G_1/G_2$, where $G_1\sim \mbox{Gam}(n^+,1)$ and $G_2\sim \mbox{Gam}(C\alphahat,\betahat)$ are independent.  Since  $n^+\rightarrow \infty$ as $C\rightarrow \infty$ and the MLEs are consistent, the delta method and the CLT give: 
${(\Ttil^+-\Expect{\Ttil^+\mid \boldsymbol{\uN}})}/\sqrt{\Var{\Ttil^+\mid \boldsymbol{\uN}}}~\mid \boldsymbol{\uN}
\Rightarrow
\mathsf{N}(0,1)$.
Hence,
\begin{align*}
  \Prob{T^+\le \rhat_p\mid \boldsymbol{\uN},\lambda}
  &
  \cip \Phi\left(\frac{\Expect{T}+\Phi^{-1}(p)\sqrt{\Var{\Ttil^+}}-\Expect{T^+}}{\sqrt{\Var{T^+}}}\right)\\
  &
  \Rightarrow
  \Phi\left(
  -\sqrt{\frac{a\beta}{\alpha t}} Z
+
\Phi^{-1}(p)\sqrt{\frac{1+a/\alpha+t/\beta}{1+t/\beta}}
  \right).
\end{align*}
As with the proof of Part 1, this does not depend on $\lambda$ so is also the limit of $\Prob{T^+\le r_p\mid \boldsymbol{\uN}}$.
Finally, from \eqref{eqn.EVarTplus}, \eqref{eqn.var.T.ratio}, \eqref{eqn.TtilEdiff} and the two CLT applications above,
\[
\frac{\rhat_p-r_p}{\sqrt{\Var{T^+\mid \lambda}}}
\approx
-\sqrt{\frac{a\beta}{\alpha t}}
\times
Z+\Phi^{-1}(p)\left\{\sqrt{\frac{1+t/\beta+a/\alpha}{1+t/\beta}}-1\right\}
\]
Since $\Var{T^+\mid \lambda}=\mathcal{O}(a/C)$ and $r_p=\mathcal{O}(a)$ the second part follows. The expression for the relative widths of the estimated and true confidence intervals follows analogously to the proof for Part 1.

\section{Proof of Theorem \ref{thrm.mgf}}
\label{sec.proof.2}
Since the cumulant
generating function for \(Z\) is \(K_Z(t)=\mu_nt+\sigma_n^2t^2/2\),
\(\kappa_j^Z=0\) for all \(j\ge 3\), so we consider the other two sets
of cumulants. The cumulant generating function for \(G\) is

\begin{align*}
K_G(t)&=-\alpha \log(1-t/\beta),
\end{align*}

so \(\kappa_j^G=(j-1)! \alpha/\beta^j>0\). From this, the \(j\)th
cumulant of \(S_n\) is

\begin{align*}
\kappa_j^{S_n}&=(j-1)!
\sum_{i=1}^n \frac{\alpha_i}{\beta_i^j}>0.
\end{align*}

The matched moments for \(G\) give
\(\alpha/\beta=\sum_{i=1}^n\alpha_i/\beta_i\) and
\(\alpha/\beta^2=\sum_{i=1}^n\alpha_i/\beta_i^2\). Thus

\begin{align*}
\frac{\kappa^G_j}{(j-1)!}
&=
\frac{\alpha}{\beta^j} = \frac{\alpha}{\beta}\times \frac{1}{\beta^{j-1}}\\
&=
\sum_{i=1}^n \alpha_i/\beta_i
\times
\frac{\left(\sum_{i=1}^n\alpha_i/\beta_i^2\right)^{j-1}}
{\left(\sum_{i=1}^n\alpha_i/\beta_i\right)^{j-1}}\\
&=
\frac{\left(\sum_{i=1}^n \alpha_i/\beta_i^2\right)^{j-1}}
{\left(\sum_{i=1}^n\alpha_i/\beta_i\right)^{j-2}}.
\end{align*}

Write \(c_i=\alpha_i/\beta_i\) and \(d_i=\alpha_i/\beta_i^2\), and
imagine that \(c_i\) and \(d_i\) are realisations from random variables
\(C\) and \(D\), where each possible value has a probability of \(1/n\),
then we have

\begin{align*}
\frac{\kappa^{S_n}_j}{n(j-1)!}
&=
\mathbb{E}\left[\frac{D^{j-1}}{C^{j-2}}\right], \\
\frac{\kappa^{G}_j}{n(j-1)!}
&=
\frac{\mathbb{E}[D]^{j-1}}{\mathbb{E}[C]^{j-2}}.
\end{align*}

Let \(\mathcal{H}_j\) be the statement
``\(\mathbb{E}[D^{j-1}/C^{j-2}]\ge \mathbb{E}[D]^{j-1}/\mathbb{E}[C]^{j-2}\).''
If \(\mathcal{H}_j\) is true then \(\kappa_j^{S_n}\ge \kappa^G_j\). To
prove \(\mathcal{H}_j\) for all \(j\ge 2\) it is sufficient to redefine
\(C\leftarrow C/\mathbb{E}[C]\) and \(D\leftarrow D/\mathbb{E}[D]\) so
\(\mathbb{E}[C]=\mathbb{E}[D]=1\) and prove

\begin{align*}
\mathcal{H}'_j:~\mathbb{E}\left[\frac{D^{j-1}}{C^{j-2}}\right]\ge 1.
\end{align*}

Now \(\mathcal{H}'_2\) is true trivially. For \(\mathcal{H}'_j\) with
\(j\ge 3\) we apply the Cauchy-Schwarz inequality and tackle odd and
even \(j\) separately.

If \(j\ge 3\) is odd, from \(\mathcal{H}'_{j/2+1/2}\),

\begin{align*}
1&\le
\mathbb{E}\left[\frac{D^{(j-1)/2}}{C^{(j-1)/2-1}}\right]^2
=
\mathbb{E}\left[\frac{D^{(j-1)/2}}{C^{j/2-1}}\times C^{1/2}\right]^2\\
&\le
\mathbb{E}\left[\frac{D^{j-1}}{C^{j-2}}\right]
\mathbb{E}\left[C\right]
=
\mathbb{E}\left[\frac{D^{j-1}}{C^{j-2}}\right].
\end{align*}

If \(j\ge 4\) is even, from \(\mathcal{H'}_{j/2+1}\)

\begin{align*}
1&\le
\mathbb{E}\left[\frac{D^{j/2}}{C^{j/2-1}}\right]^2
=
\mathbb{E}\left[\frac{D^{(j-1)/2}}{C^{j/2-1}}\times D^{1/2}\right]^2\\
&\le
\mathbb{E}\left[\frac{D^{j-1}}{C^{j-2}}\right]
\mathbb{E}\left[D\right]
=
\mathbb{E}\left[\frac{D^{j-1}}{C^{j-2}}\right].
\end{align*}

Thus, by induction \(\mathcal{H}'_j\) holds for all \(j\ge 2\), and so
does \(\mathcal{H}_j\). Hence \(\kappa_j^{S_n}\ge \kappa_j^{G}\) for all
\(j \ge 2\). Indeed, by design we have equality for \(j=1,2\).

\section{Supporting information  for Section \ref{sec.theory.difft}}
\label{sec.supp.gamm.app}

Figures \ref{fig:gamma_sum} and \ref{fig:abhat_nct} support the use of the moment matching approximation  proposed in Section \ref{sec.theory.difft}. Figure \ref{fig:gamma_sum} shows the accuracy of the moment-matched gamma approximation to the distribution of $\lambda_{\bullet}$, as well as a CLT-based Gaussian approximation,   using rates arising from opening time scenario (1). The moment-matched gamma performs very well, and is superior to the CLT for small numbers of centres, while both are very accurate for large $C$. The Gaussian approximation is purely present for comparison, since a gamma distribution is required for tractability of the integrals over $\lambda_\bullet$, both for $\Ntil_\bullet^+$ and $\Ttil^+$. Plots for opening time scenario (2) (not included) show a similarly good fit. Figure \ref{fig:abhat_nct} provides an empirical comparison of $\hat{\alpha}/\hat{\beta}$ against $n^{*}/Ct^{*}$ for the two opening time scenarios. The plots support the use of the MLEs from the original data for the hypothetical data set where $n^{*}_{\bullet}$ patients have been recruited in time $t^{*}$, as outlined in Section \ref{sec.theory.difft} of the main article.

\begin{figure}[ht]
    \centering
    \begin{tabular}{cc}
         \includegraphics[width = 0.47 \textwidth]{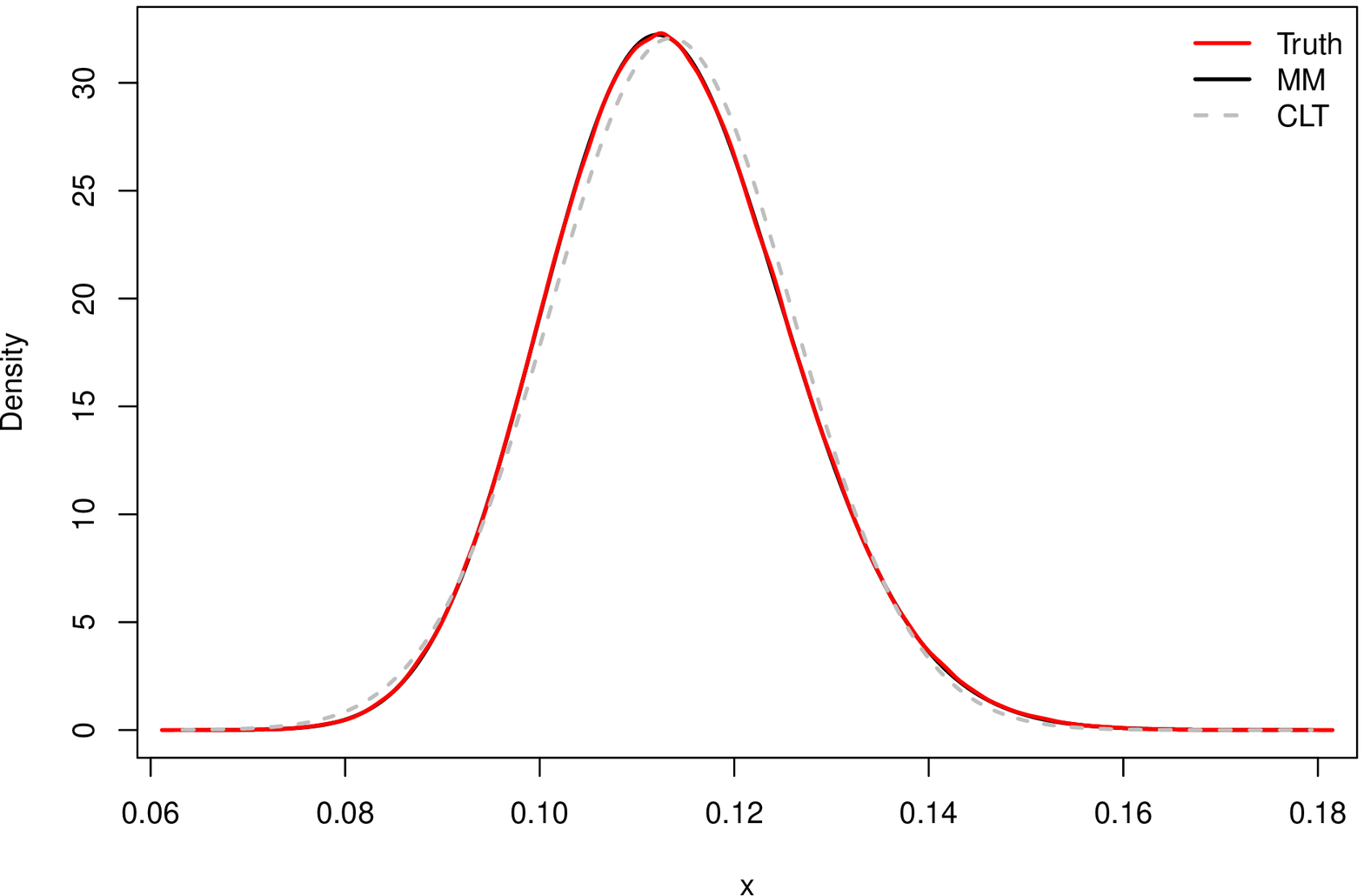} & \includegraphics[width = 0.47 \textwidth]{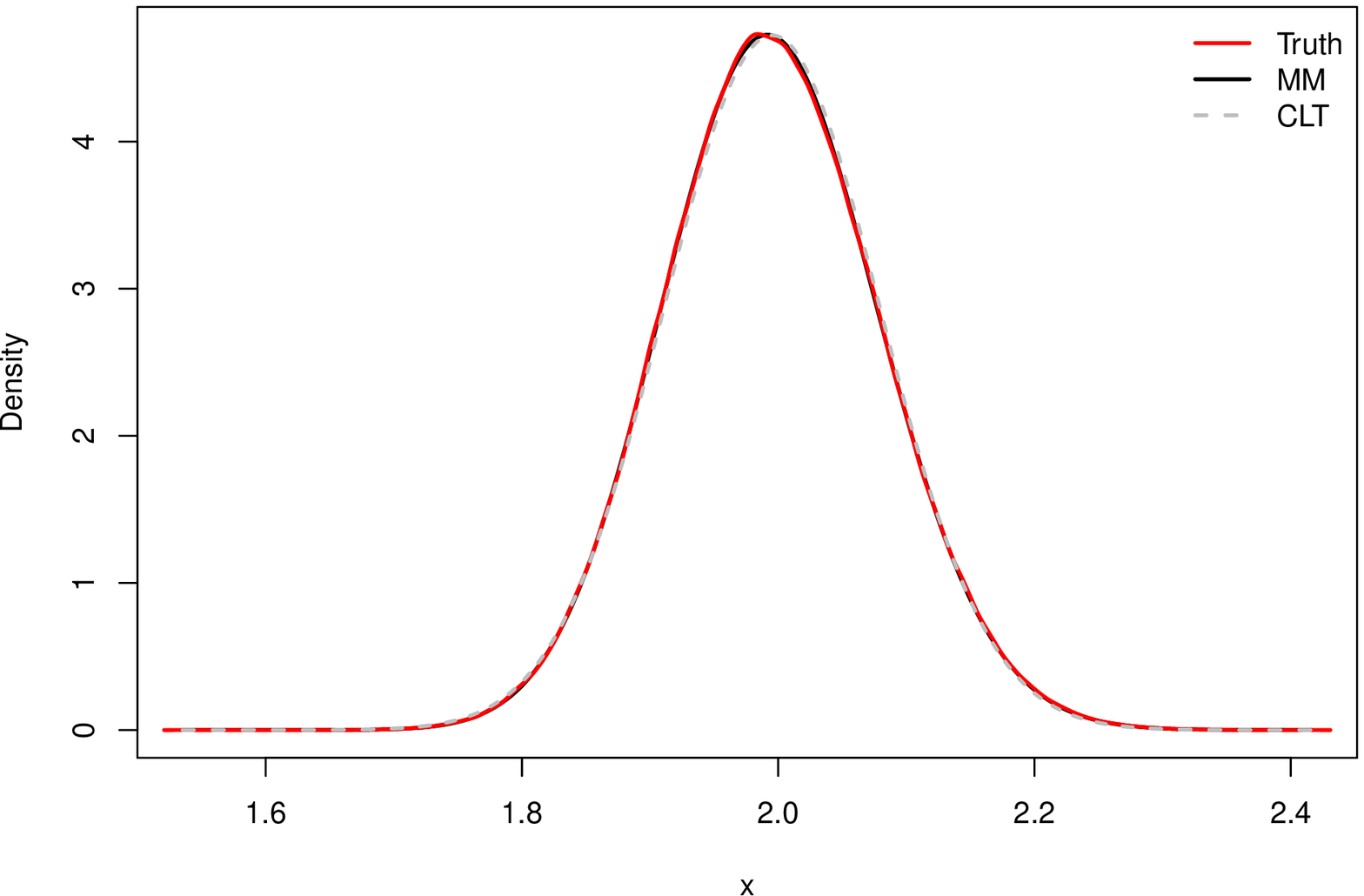}  
    \end{tabular}
    \caption{Comparison of using the moment-matched (MM) method and the central limit theorem (CLT) to estimate the sum of gamma random variables with different rate parameters, from opening time scenario (1), with $C=20$ (left) / $C=150$ (right), $\alpha = 2$, $\beta = 150$ and $t=200$.}
    \label{fig:gamma_sum}
\end{figure}
\FloatBarrier
\vspace{2em}

\begin{figure}[ht]
    \centering
    \begin{tabular}{cc}
        \includegraphics[width = 0.47 \textwidth]{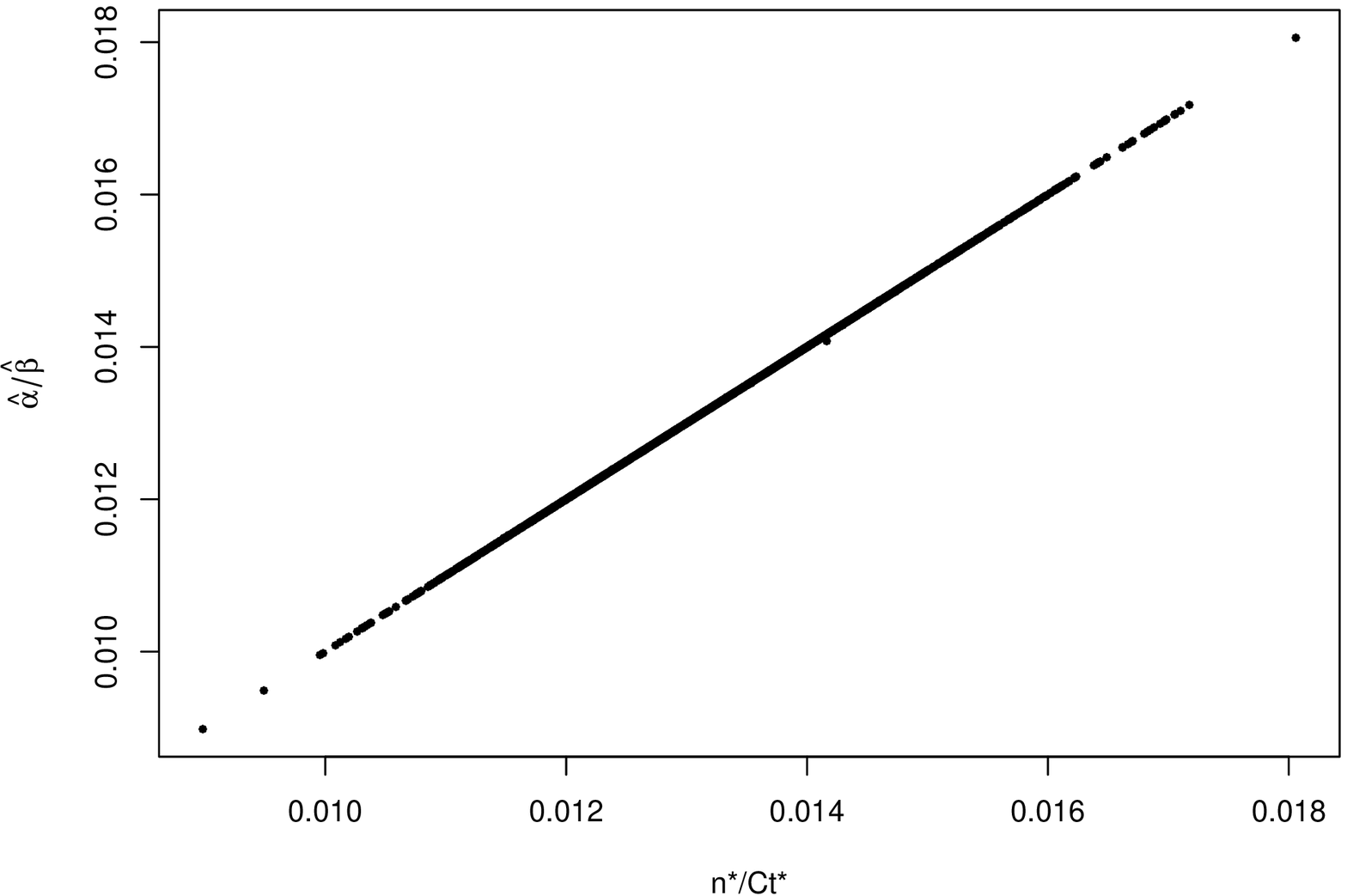} &
        \includegraphics[width = 0.47 \textwidth]{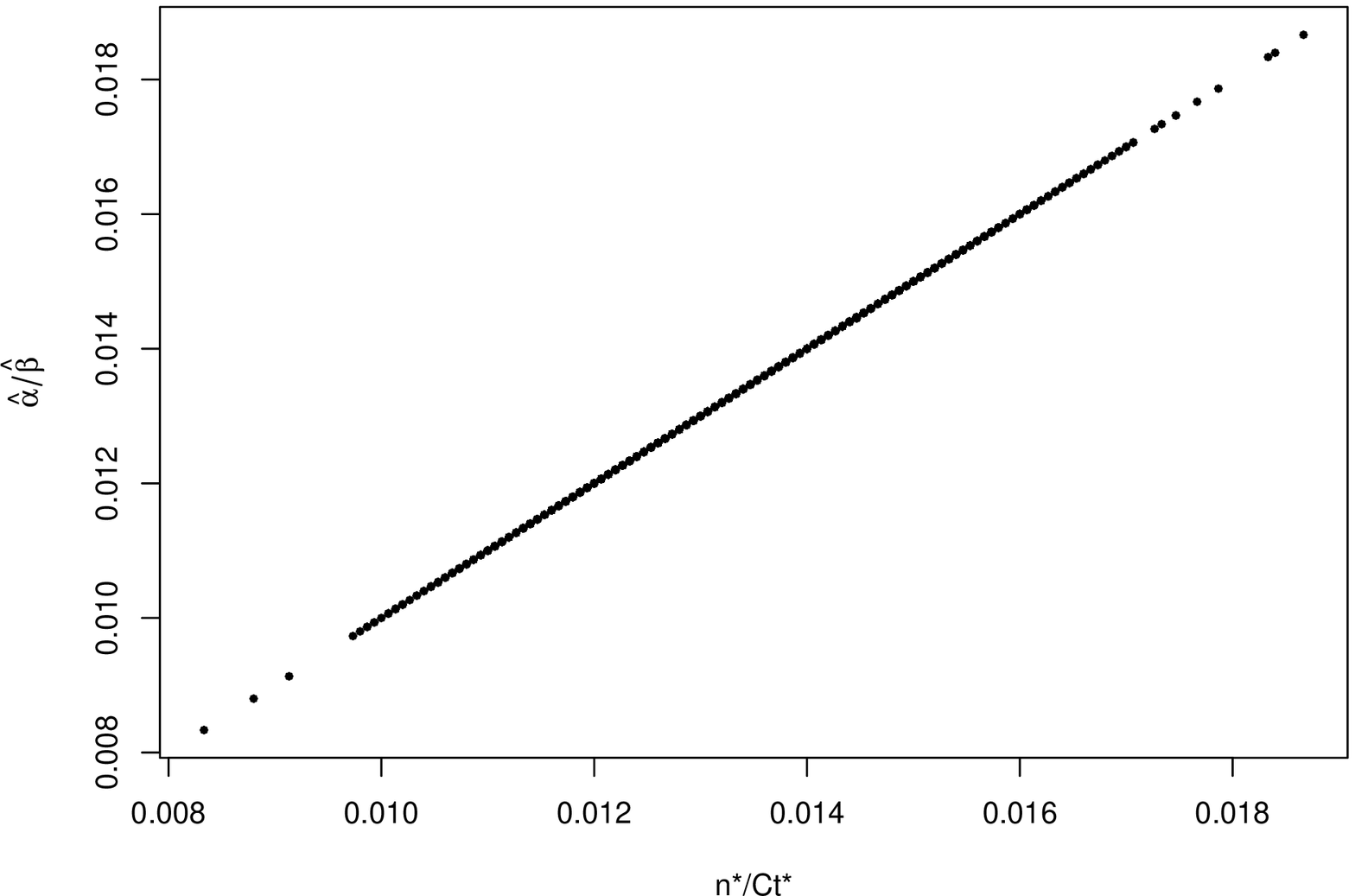}
    \end{tabular}
    \caption{Plot of $\hat{\alpha}/\hat{\beta}$ against $n^{*}/Ct^{*}$ for centre opening time scenario (1) (left) and scenario (2) (right) with $\alpha = 2$, $\beta=150$, $C=150$ and $t=200$.}
    \label{fig:abhat_nct}
\end{figure}

\FloatBarrier 

\section{Additional results for Section  \ref{sec.simulate}}

\subsection{Additional results for Section \ref{sec.sim.thrm}}
\label{sec.addn.simthrm}
Figure \ref{fig:PVI_t} considers objective (2). The census time is varied across simulations, while $n_\bullet^+$, the number of centres, and the individual centre recruitment rate are fixed. It  provides further validation of Theorem \ref{thrm.asympt}. The accuracy of the $p=0.25$ quantile is primarily dependent on the ratio of $n_\bullet^{+}/n_\bullet$, hence for a fixed $n_\bullet^{+}$, the density concentrates at the point mass $p$ with increasing census time. The observed effect of the census time on the accuracy of the predicted quantile compares well with the theoretical densities. 

\begin{figure}[ht]
    \centering
    \includegraphics[scale = 0.53]{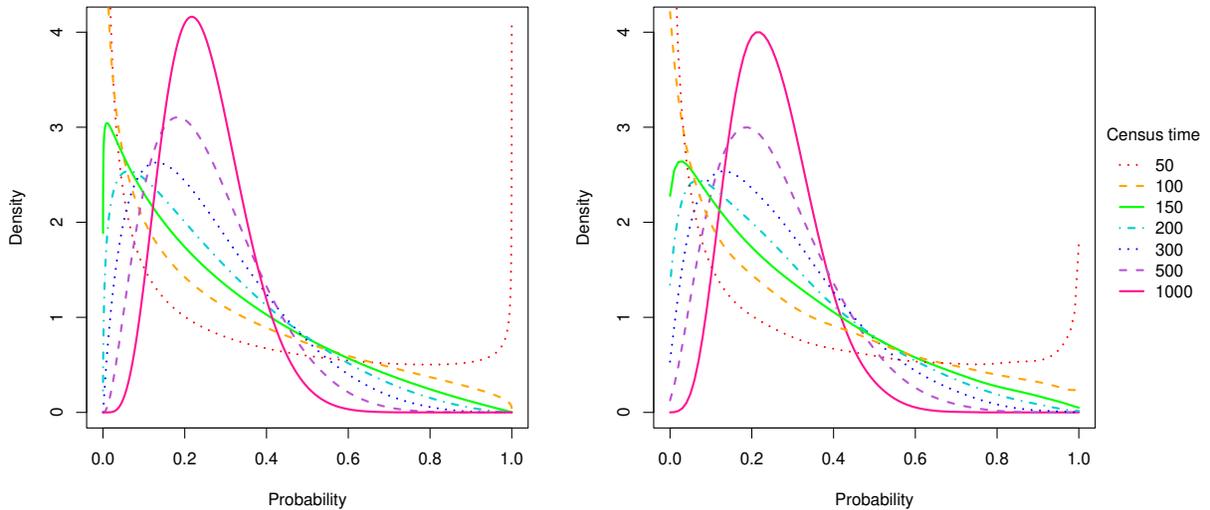}
    \caption{Theoretical density (left) and estimated density over repeated sampling (right) of $\Prob{T^{+} \leq \rhat_{0.25}}$ for each $t \in \mathbb{T}_{2}$ with $n_\bullet^{+}=200$ fixed across all simulation runs.}
    \label{fig:PVI_t}
\end{figure}

Figure \ref{fig:NB_fixedb} shows an equivalent plot to the right-hand side of Figure \ref{fig:change_t_m} of the main article, but with a fixed $\beta=150$. It gives the same patterns, which we would expect because we have not changed the ratio of $t^{+}/t$.

Figure \ref{fig:PVI_fixedb} shows an equivalent plot to Figure \ref{fig:PVI_C} of the main article, but now the rate parameter, $\beta$, is fixed and so the enrolment rate varies with number of centres. Here we see a very different pattern since $n$ is increasing with $C$, thus decreasing the ratio $n^{+}/n$ and improving model predictions.

\begin{figure}[ht]
    \centering
    \includegraphics[scale = 0.6]{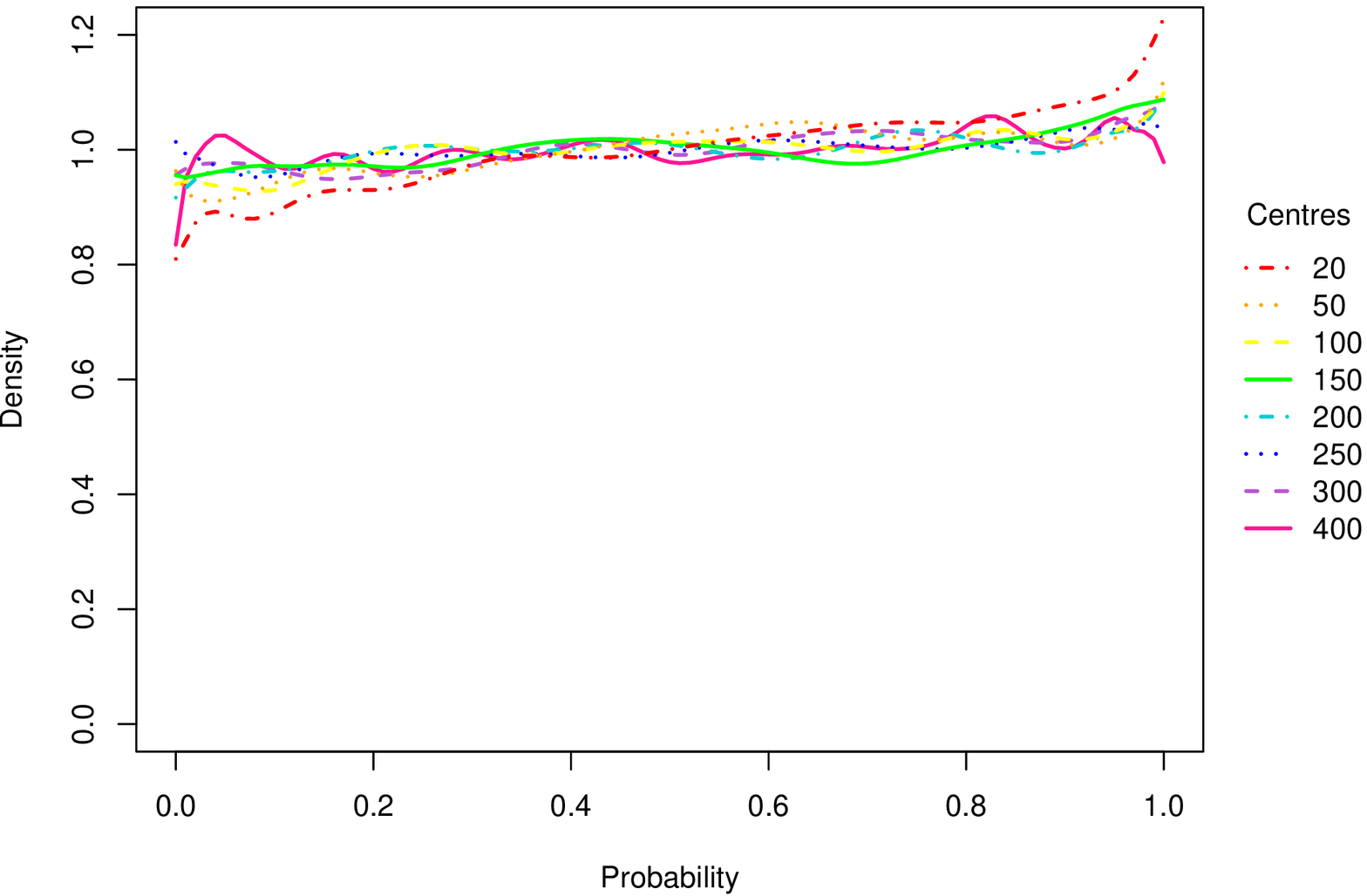}
    \caption{Estimated density (over repeated sampling) of $\Prob{N_\bullet^+\le \qhat_{0.5}}$ for each $C \in \boldsymbol{\mathbb{C}_{1}}$ with $t=t^{+}=200$ and $\beta=150$ fixed across all simulation runs.}
    \label{fig:NB_fixedb}
\end{figure}

\begin{figure}[ht]
    \centering
    \includegraphics[scale = 0.55]{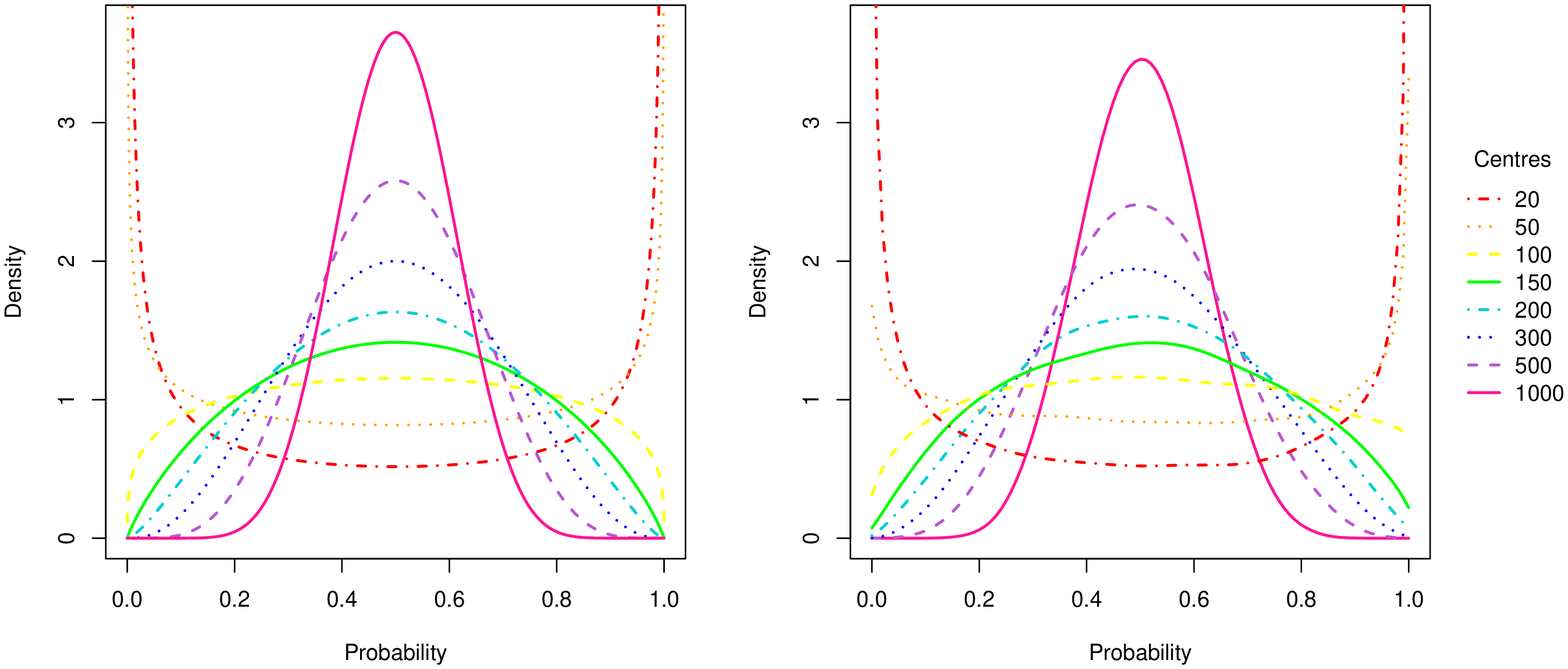}
    \caption{Theoretical density (left) and estimated density over repeated sampling (right) of $\Prob{T^{+} \leq \rhat_{0.5}}$ for each $C \in \boldsymbol{\mathbb{C}_{2}}$ with $n^{+}_\bullet=200$ and $\beta=150$ fixed across all simulation runs.}
    \label{fig:PVI_fixedb}
\end{figure}

\FloatBarrier

\subsection{Additional results for Section \ref{sec.sim.intervals}} \label{sec.addn.sim.intervals}

The results tables in this section evidence further investigation into the interval adjustment methodology.

Table \ref{tab:betafifty} shows that the methodology is still helpful for creating prediction intervals for $N_\bullet^+$ when $\beta$ is $50$ rather than $150$. Tables \ref{tab:quantile_adj_Ctwenty} and \ref{tab:quantile_adj_ninefive} correspond to $\beta=150$ but, respectively examining $95\%$ intervals or $90\%$ intervals with $C=20$.
The remaining tables display results of interval adjustment for $T^{+}$ for each of the three centre opening time scenarios considered.

\begin{table}[ht]
\centering
\caption{The mean (over repeated sampling) of the true coverage probability and width of an intended 90\% prediction interval for $N_\bullet^{+}$ with $\beta=50$ using the unadjusted and adjusted method. \vspace{1em}}
\label{tab:betafifty}
\begin{tabular}{@{}ccccc@{}}
\toprule
 & \multicolumn{2}{c}{\textbf{Unadjusted}} & \multicolumn{2}{c}{\textbf{Adjusted}} \\ \midrule
 & Coverage  (\%) & \textbf{$w$} & Coverage  (\%) & \textbf{$w$} \\ \midrule
$t=50, t^{+}=350$ & 77.8 & 317.1 & 90.2 & 426.5 \\ \midrule
$t=100, t^{+}=300$ & 84.8 & 240.7 & 90.2 & 278.8 \\ \midrule
$t=150, t^{+}=250$ & 86.2 & 190.7 & 89.4 & 207.9 \\ \midrule
\multicolumn{1}{l}{$t=200, t^{+}=200$} & 88.3 & 152.7 & 90.2 & 161.2 \\ \midrule
\multicolumn{1}{l}{$t=250, t^{+}=150$} & 88.7 & 120.8 & 89.8 & 124.8 \\ \midrule
\multicolumn{1}{l}{$t=300, t^{+}=100$} & 89.6 & 91.3 & 90.2 & 93.0 \\ \midrule
\multicolumn{1}{l}{$t=350, t^{+}=50$} & 89.8 & 60.5 & 90.1 & 60.9 \\ 
\bottomrule
\end{tabular}
\end{table}


\begin{table}[ht]
\centering
\caption{The mean (over repeated sampling) of the true coverage probability and width of an intended 90\% prediction interval for $N_\bullet^{+}$ with $C=20$ using the unadjusted and adjusted method. \vspace{1em}}
\label{tab:quantile_adj_Ctwenty}
\begin{tabular}{@{}ccccc@{}}
\toprule
 & \multicolumn{2}{c}{\textbf{Unadjusted}} & \multicolumn{2}{c}{\textbf{Adjusted}} \\ \midrule
 & Coverage  (\%) & $w$ & Coverage  (\%) & $w$ \\ \midrule
$t=50, t^{+}=350$ & 59.2 & 46.5 & 89.7 & 88.3 \\ \midrule
$t=100, t^{+}=300$ & 73.4 & 40.2 & 89.9 & 58.0 \\ \midrule
$t=150, t^{+}=250$ & 80.3 & 34.5 & 89.9 & 43.4 \\ \midrule
\multicolumn{1}{l}{$t=200, t^{+}=200$} & 84.1 & 29.1 & 90.0 & 33.7 \\ \midrule
\multicolumn{1}{l}{$t=250, t^{+}=150$} & 86.5 & 23.8 & 90.0 & 26.0 \\ \midrule
\multicolumn{1}{l}{$t=300, t^{+}=100$} & 87.8 & 18.4 & 89.8 & 19.4 \\ \midrule
\multicolumn{1}{l}{$t=350, t^{+}=50$} & 88.4 & 12.4 & 89.3 & 12.7 \\ \bottomrule
\end{tabular}
\end{table}

\FloatBarrier

\begin{table}[ht]
\centering
\caption{The mean (over repeated sampling) of the true coverage probability and width of an intended 95\% prediction interval for $N^{+}_\bullet$ using the unadjusted and adjusted method. \vspace{0.5em}}
\label{tab:quantile_adj_ninefive}
\begin{tabular}{@{}ccccc@{}}
\toprule
 & \multicolumn{2}{c}{\textbf{Unadjusted}} & \multicolumn{2}{c}{\textbf{Adjusted}} \\ \midrule
 & Coverage  (\%) & \textbf{$w$} & Coverage  (\%) & \textbf{$w$} \\ \midrule
$t=50, t^{+}=350$ & 72.0 & 167.4 & 94.4 & 292.7 \\ \midrule
$t=100, t^{+}=300$ & 84.2 & 140.9 & 94.8 & 191.7 \\ \midrule
$t=150, t^{+}=250$ & 88.9 & 118.0 & 94.7 & 143.0 \\ \midrule
\multicolumn{1}{l}{$t=200, t^{+}=200$} & 91.3 & 97.9 & 94.7 & 110.7 \\ \midrule
\multicolumn{1}{l}{$t=250, t^{+}=150$} & 92.8 & 79.4 & 94.8 & 85.8 \\ \midrule
\multicolumn{1}{l}{$t=300, t^{+}=100$} & 93.8 & 61.2 & 94.9 & 63.9 \\ \midrule
\multicolumn{1}{l}{$t=350, t^{+}=50$} & 94.5 & 41.1 & 94.9 & 41.9 \\ \bottomrule
\end{tabular}
\end{table}

\FloatBarrier

\begin{table}[ht]
\centering
\caption{The mean (over repeated sampling) of the true coverage probability and width of an intended 90\% prediction interval for $T^{+}$ with $n_\bullet^{+}=200$ using the unadjusted and adjusted method.\vspace{1em}}
\label{tab:quantile_adj_PVI}
\begin{tabular}{@{}cccccc@{}}
\toprule
 & \multicolumn{2}{c}{\textbf{Unadjusted}} & \multicolumn{2}{c}{\textbf{Adjusted}} & \\ \midrule
 & Coverage  (\%) & $w$ & Coverage  (\%) & $w$ & \\ \midrule
$t=50$ & 73.9 & 28.7 & 89.6 & 41.5 & \\ \midrule
$t=100$ & 82.4 & 27.7 & 89.7 & 33.4 &\\ \midrule
$t=150$ & 85.4 & 27.0 & 89.7 & 30.4 &\\ \midrule
$t=200$ & 86.8 & 26.5 & 89.7 & 28.8 &\\ \midrule
$t=300$ & 88.2 & 25.9 & 89.8 & 27.1 &\\ \midrule
$t=500$ & 89.4 & 25.1 & 90.1 & 25.6 &\\ \midrule
$t=1000$ & 89.8 & 24.4 & 90.0 & 24.5 &\\ \bottomrule
\end{tabular}
\end{table}

 \FloatBarrier

\begin{table}[ht]
\centering
\caption{The mean (over repeated sampling) true coverage probability and width  of an intended 90\% prediction interval for $T^{+}$ with $n_\bullet^{+}=200$ using the unadjusted and adjusted method for opening time scenario (1).\vspace{1em}}
\label{tab:PVI_diffc}
\begin{tabular}{@{}ccccccccc@{}}
\toprule 
 & \multicolumn{1}{l}{} & \multicolumn{1}{l}{} & \multicolumn{1}{l}{} & \multicolumn{2}{c}{\textbf{Unadjusted}} & \multicolumn{2}{c}{\textbf{Adjusted}} &\\ \midrule
 & \multicolumn{1}{c}{$t^{*}$} & \multicolumn{1}{c}{$t^{*} / t_{c}$} & \multicolumn{1}{c}{$n_{\bullet}^{*} / n_{\bullet} $} & Coverage (\%) & $w$ & Coverage (\%) & $w$ &\\ \midrule
$t=50$ & 24.4 & 0.957 & 0.956 & 61.6 & 29.2 & 89.8 & 55.1 &\\ \midrule
$t=100$ & 46.5 & 0.921 & 0.920 & 73.7 & 28.7 & 89.9 & 42.4 &\\ \midrule
$t=150$ & 67.2 & 0.891 & 0.890 & 78.7 & 28.3 & 89.9 & 37.4 &\\ \midrule
$t=200$ & 86.9 & 0.866 & 0.866 & 81.5 & 27.9 & 89.9 & 34.6 &\\ \midrule
$t=300$ & 124.2 & 0.826 & 0.825 & 84.4 & 27.3 & 89.7 & 31.6 &\\ \midrule
$t=500$ & 192.5 & 0.769 & 0.768 & 86.8 & 26.6 & 89.9 & 28.9 &\\ \midrule
$t=1000$ & 344.3 & 0.689 & 0.687 & 88.6 & 25.7 & 89.9 & 26.6 &\\ \bottomrule
\end{tabular}
\end{table}

\FloatBarrier

\begin{table}[ht]
\centering
\caption{The mean (over repeated sampling) true coverage probability and width of an intended 90\% prediction interval for $T^{+}$ with $n_\bullet^{+}=200$ using the unadjusted and adjusted method for opening time scenario (2).\vspace{1em}}
\label{tab:PVI_diffc_half}
\begin{tabular}{@{}ccccccccc@{}}
\toprule
 & \multicolumn{1}{l}{} & \multicolumn{1}{l}{} & \multicolumn{1}{l}{} & \multicolumn{2}{c}{\textbf{Unadjusted}} & \multicolumn{2}{c}{\textbf{Adjusted}} & \\ \midrule
 & \multicolumn{1}{c}{$t^{*}$} & \multicolumn{1}{c}{$t^{*} / t_{c}$} & \multicolumn{1}{c}{$n_{\bullet}^{*} / n_{\bullet} $} & Coverage (\%) & $w$ & Coverage (\%) & $w$ & \\ \midrule
$t=50$ & 21.7 & 0.867 & 0.863 & 60.1 & 29.9 & 89.7 & 59.1 &\\ \midrule
$t=100$ & 38.3 & 0.766 & 0.763 & 69.7 & 29.2 & 89.5 & 46.1 &\\ \midrule
$t=150$ & 51.2 & 0.683 & 0.679 & 74.2 & 28.9 & 89.4 & 41.3 &\\ \midrule
$t=200$ & 61.7 & 0.617 & 0.614 & 76.5 & 28.6 & 89.3 & 38.8 &\\ \midrule
$t=300$ & 77.0 & 0.513 & 0.512 & 79.6 & 28.3 & 89.5 & 36.1 &\\ \midrule
$t=500$ & 96.1 & 0.384 & 0.382 & 82.2 & 27.9 & 89.7 & 33.8 &\\ \midrule
$t=1000$ & 118.9 & 0.238 & 0.236 & 83.9 & 27.5 & 89.7 & 32.1 &\\ \bottomrule
\end{tabular}
\end{table}

\FloatBarrier

\subsection{Diagnostics for Section \ref{sec.AZdata}}
\label{sec.addn.diagnostics}
In this section, we present diagnostics which show the suitability the Poisson-Gamma model for the oncology clinical trial data analysed in Section \ref{sec.AZdata}.

According to the model, the marginal distribution of the counts in a given initial period $[0,a]$ of each centre's recruitment is negative binomial. Each centre's recruitment count, in theory, follows the same distribution, which depends on $\alpha$ and $\beta$ only, and they are independent from each other. A QQ-plot can be used as a diagnostic test for this assumption, by comparing the quantiles of the theoretical negative binomial distribution with those of the observed recruitment counts for the individual centres. 

Figure \ref{fig:QQplot} shows two Q-Q plots, applied to the oncology clinical trial data set used in Section \ref{sec.AZdata} in the main article. The first uses $a=t/2=0.0625$, and considers only the data that would have been available at the census time, $t=0.125$ . Thus it looks only at the 18 centres that had already opened by time $a=0.0625$.   To verify that the negative binomial distribution is reasonable for all of the centres, the second Q-Q plot sets $a=\tau/10=0.1$, and uses all of the available information.  The plots are both close to a straight line, suggesting that the assumption of a hierarchical gamma distribution for the centre intensities is reasonable.

\begin{figure}[ht]
    \centering
    \subfigure[a = 0.0625]{\includegraphics[width = 0.48 \textwidth, height = 7.5cm]{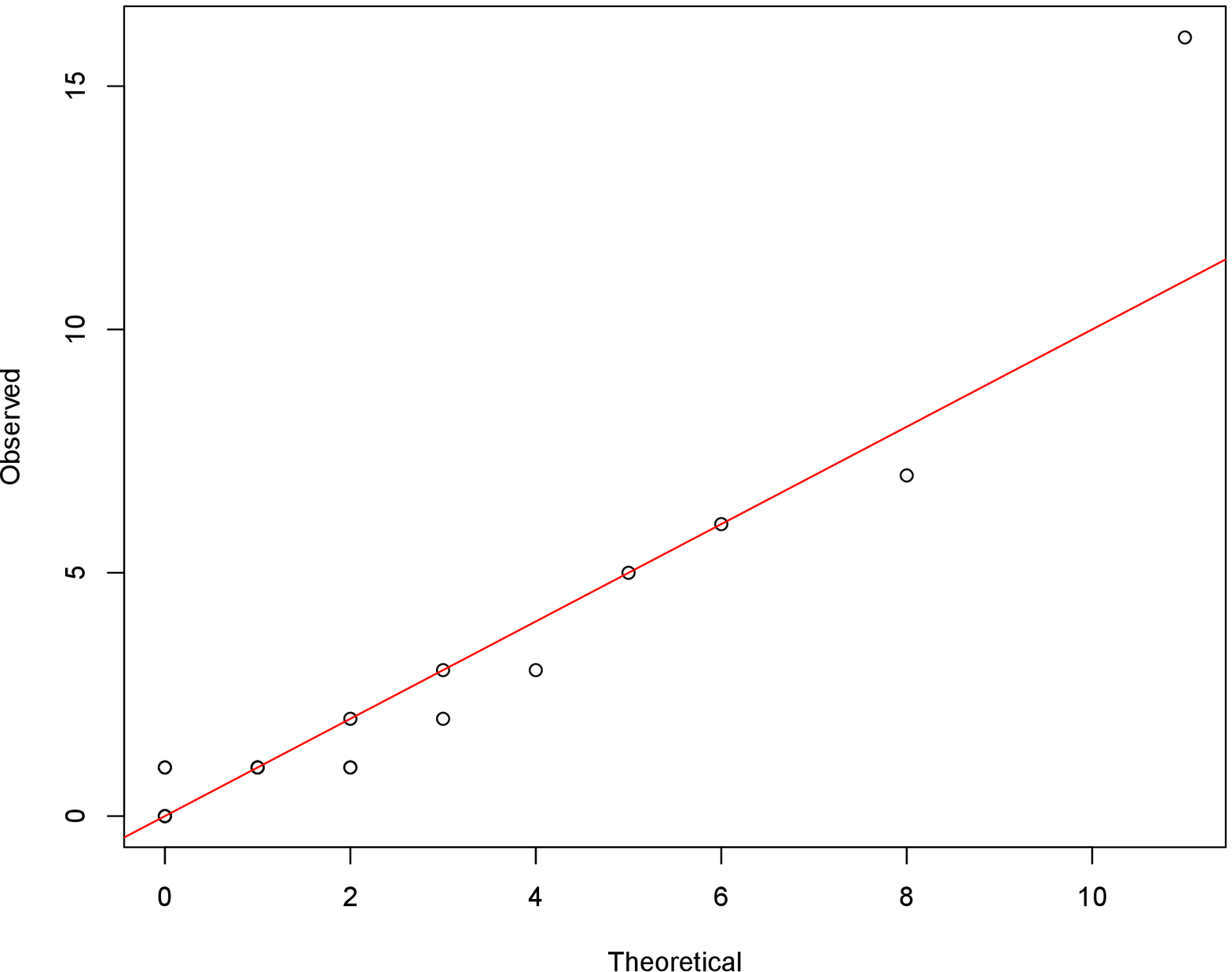}}
    \hspace{1em}
    \subfigure[a = 0.1]{\includegraphics[width = 0.48 \textwidth, height = 7.5cm]{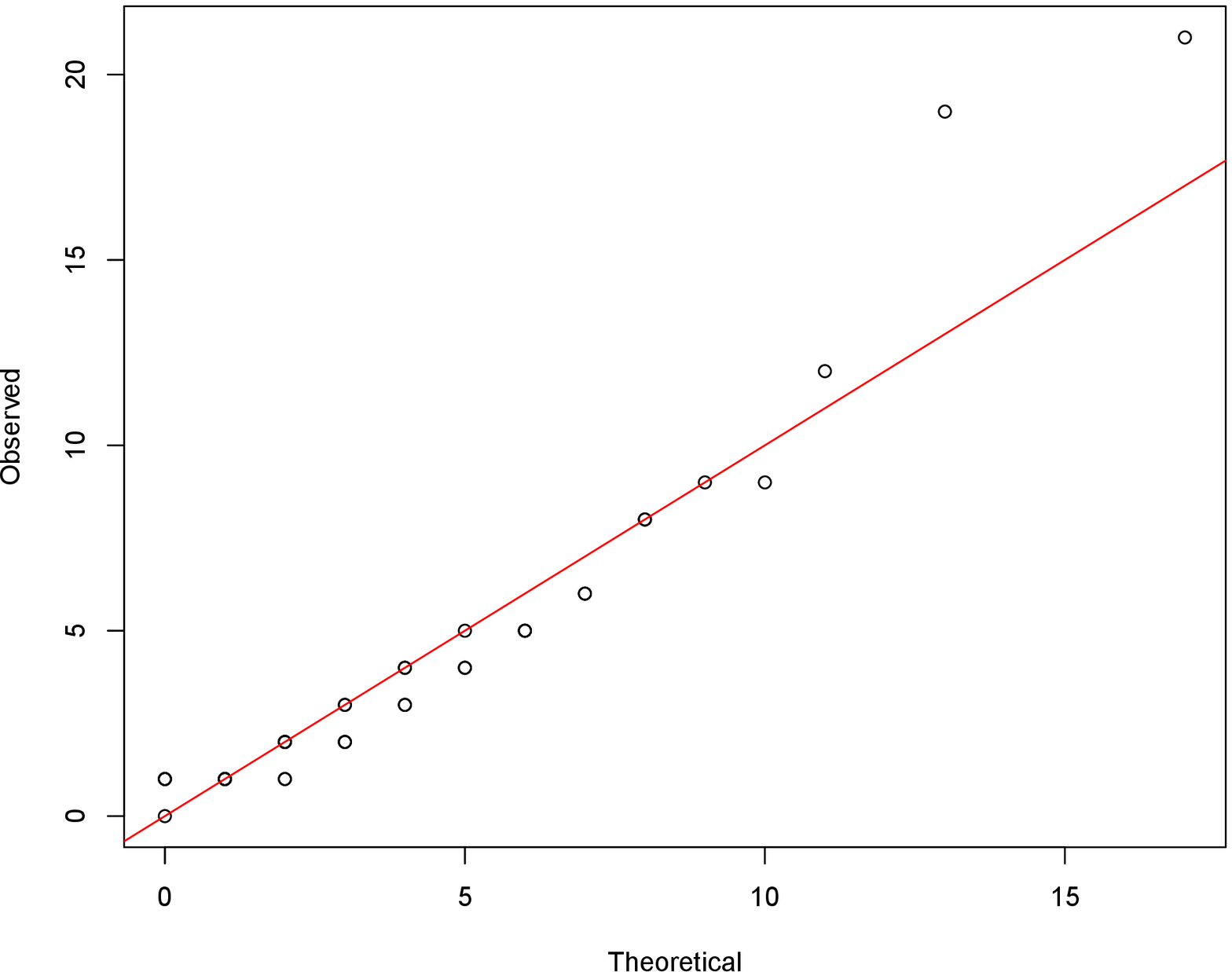}}
    \caption{QQ-plots for observed recruitment to each centre of an oncology multicentre clinical trial compared to the theoretical negative binomial distribution.}
    \label{fig:QQplot}
\end{figure}

\FloatBarrier

\section{Departure from the hierarchical gamma distribution}
\label{gam.mix}

This section investigates the impact of departure from the assumption of a gamma hierarchical distribution on the interval adjustment methodology.

We simulate from a scenario where the individual centre recruitment rates, $\lambda_{c}$, come from a mixture of two Gamma distributions with a density of
\begin{align*}
    f(\lambda;\alpha,\beta_1,\beta_2)&=
    \frac{1}{2}
    \mbox{Gam}(\lambda;\alpha, \beta_{1})
    +
    \frac{1}{2}
    \mbox{Gam}(\lambda;\alpha, \beta_{2}).
\end{align*} 
Table \ref{tab:gamma_mix_NB} displays results of the interval adjustment for $N_\bullet^+$ with $\beta_{1}=150$ and $\beta_{2}=450$, and uniformly distributed centre opening times. Table \ref{tab:gamma_mix_PVI} gives analogous results for $T^+$. Both investigations show that despite the incorrect specification of the hierarchical distribution in the fitted model, the coverage of the adjusted prediction intervals is a considerable improvement on the unadjusted interval and is very close to the intended coverage.

\begin{table}[ht]
\centering
\caption{The mean (over repeated sampling) coverage and width of an intended 90\% prediction interval for $N^{+}_\bullet$ with $\beta_{1}=150$ and $\beta_{2}=450$ using the unadjusted and adjusted method for opening time scenario (1).\vspace{1em}}
\label{tab:gamma_mix_NB}
\begin{tabular}{@{}ccccccccc@{}}
\toprule
 & \multicolumn{1}{l}{} & \multicolumn{1}{l}{} & \multicolumn{1}{l}{} & \multicolumn{2}{c}{\textbf{Unadjusted}} & \multicolumn{2}{c}{\textbf{Adjusted}} & \\ \midrule
 & \multicolumn{1}{c}{$t^{*}$} & \multicolumn{1}{c}{$t^{*} / t_{c}$} & \multicolumn{1}{c}{$n_{\bullet}^{*} / n_{\bullet} $} & Coverage (\%) & $w$ & Coverage (\%) & $w$ &\\ \midrule
$t=50, t^{+}=350$ & 24.2 & 0.948 & 0.207 & 52.2 & 122.7 & 89.9 & 278.0 &\\ \midrule
$t=100, t^{+}=300$ & 46.1 & 0.913 & 0.911 & 65.5 & 105.3 & 89.7 & 180.0 &\\ \midrule
$t=150, t^{+}=250$ & 66.6 & 0.882 & 0.880 & 73.6 & 89.1 & 89.9 & 130.9 &\\ \midrule
$t=200, t^{+}=200$ & 86.1 & 0.856 & 0.854 & 78.4 & 73.7 & 90.0 & 97.9 &\\ \midrule
$t=250, t^{+}=150$ & 104.8 & 0.835 & 0.833 & 81.8 & 59.1 & 89.8 & 72.5 &\\ \midrule
$t=300, t^{+}=100$ & 122.9 & 0.816 & 0.814 & 84.3 & 44.6 & 89.5 & 51.2 &\\ \midrule
$t=350, t^{+}=50$ & 140.3 & 0.799 & 0.797 & 87.1 & 29.2 & 89.6 & 31.3 &\\ \bottomrule
\end{tabular}
\end{table}

\begin{table}[ht]
\centering
\caption{The mean (over repeated sampling) coverage and width of an intended 90\% prediction interval for $T^{+}$ with $n_\bullet^{+}=200$, $\beta_{1}=150$, $\beta_{2}=450$ using the unadjusted and adjusted method for opening time scenario (2).\vspace{1em}}
\label{tab:gamma_mix_PVI}
\begin{tabular}{@{}ccccccccc@{}}
\toprule
 & \multicolumn{1}{l}{} & \multicolumn{1}{l}{} & \multicolumn{1}{l}{} & \multicolumn{2}{c}{\textbf{Unadjusted}} & \multicolumn{2}{c}{\textbf{Adjusted}} &\\ \midrule
 & \multicolumn{1}{c}{$t^{*}$} & \multicolumn{1}{c}{$t^{*} / t_{c}$} & \multicolumn{1}{c}{$n_{\bullet}^{*} / n_{\bullet} $} & Coverage (\%) & $w$ & Coverage (\%) & $w$ &\\ \midrule
$t=50$ & 24.2 & 0.949 & 0.948 & 59.3 & 50.3 & 90.6 & 102.7 &\\ \midrule
$t=100$ & 46.1 & 0.913 & 0.911 & 70.5 & 48.0 & 90.0 & 75.6 &\\ \midrule
$t=150$ & 66.6 & 0.882 & 0.880 & 76.6 & 46.8 & 90.0 & 65.1 &\\ \midrule
$t=200$ & 86.1 & 0.856 & 0.854 & 79.9 & 45.8 & 90.1 & 59.4 &\\ \midrule
$t=300$ & 122.9 & 0.816 & 0.814 & 82.6 & 44.3 & 89.5 & 53.0 &\\ \midrule
$t=500$ & 190.4 & 0.759 & 0.758 & 86.0 & 42.5 & 89.9 & 47.4 &\\ \midrule
$t=1000$ & 340.2 & 0.679 & 0.679 & 88.2 & 40.3 & 89.9 & 42.3 &\\ \bottomrule
\end{tabular}
\end{table}

\FloatBarrier

\end{document}